\newcommand{\defeq}{\stackrel{\scriptscriptstyle\text{def}}{=}}
\newcommand{\etal}{\text{et al.}\xspace} %
\newcommand{\ra}{\rightarrow}
\newcommand{\N}{\mathbb{N}}                    %
\renewcommand{\vec}[1]{\bm{#1}}                %
\newcommand{\set}[1]{\left\{#1\right\}}        %
\newcommand{\tuple}[1]{[#1]}                 
\renewcommand{\vec}[1]{\bm{#1}}                %
\newcommand{\pop}[1]{\mathrm{Pop}(#1)}         %
\renewcommand{\PP}{\mathcal{P}}                  %
\newcommand{\trans}[1]{\xrightarrow{#1}}       %
\newcommand{\pre}{\mathit{pre}} %
\newcommand{\post}{\mathit{post}} %
\newcommand{\postco}{\mathit{post}_a}
\newcommand{\preco}{\mathit{pre}_a}
\newcommand{\unorm}[1]{\|{#1}\|_u}
\newcommand{\lnorm}[1]{\|{#1}\|_l}
\newcommand{\sem}[1]{\llbracket{#1}\rrbracket}
\newcommand{\fire}[1]{{#1}_{t^*}}
\newcommand{\ecov}[1]{\geq_{#1}}
\theoremstyle{plain}
\newtheorem{proposition}[theorem]{Proposition} %
\theoremstyle{remark}
\newtheorem{remarki}[theorem]{Remark} %
\title{Verification of Immediate Observation Population Protocols}
\author{Javier Esparza}{Technische Universität München, Munich, Germany}{esparza@in.tum.de}{0000-0001-9862-4919}{Supported by ERC Advanced Grant (787367: PaVeS).}
\author{Pierre Ganty}{IMDEA Software Institute, Madrid, Spain}{pierre.ganty@imdea.org}{0000-0002-3625-6003}{Supported by Madrid Regional Government project S2013/ICE-2731, N-Greens Software - Next-GeneRation Energy-EfficieNt Secure Software, the Spanish Ministry of Economy and Competitiveness project No. TIN2015-71819-P, RISCO - RIgorous analysis of Sophisticated COncurrent and distributed systems, and by a Ramón y Cajal fellowship RYC-2016-20281.}
\author{Rupak Majumdar}{MPI-SWS, Kaiserslautern, Germany}{rupak@mpi-sws.org}{}{supported by the  ERC Synergy award (IMPACT).}
\author{Chana Weil-Kennedy}{Technische Universität München, Munich, Germany}{chana.wk@gmail.com}{}{Part of this work was done during a visit at the IMDEA Software Institute.}
\authorrunning{J. Esparza and P. Ganty and R. Majumdar and C. Weil-Kennedy} %
\subjclass{Theory of Computation \textrightarrow{} Models of Computation \textrightarrow{} Concurrency \textrightarrow{} Distributed Computing models}%
\keywords{Population protocols, Immediate Observation, Parametrized verification}%
\begin{document}

\maketitle

\begin{abstract}
Population protocols (Angluin et al., \textit{PODC}, 2004) are a formal model of sensor
networks consisting of identical mobile devices. Two devices can interact 
and thereby change their states. Computations are
infinite sequences of interactions satisfying a strong fairness constraint.

A population protocol is well-specified if for every initial configuration $C$
of devices, and every computation starting at $C$, all devices eventually agree
on a consensus value depending only on $C$. If a protocol is well-specified,
then it is said to compute the predicate that assigns to each initial
configuration its consensus value.

In a previous paper we have shown that the problem whether a given protocol is well-specified
and the problem whether it computes a given predicate are decidable. 
However, in the same paper we prove that both problems are at least as hard as the reachability problem for Petri nets. 
Since all known algorithms for Petri net reachability have non-primitive recursive complexity, in this 
paper we restrict attention to immediate observation (IO) population protocols, a class introduced and studied in (Angluin et al., \textit{PODC}, 2006). 
We show that both problems are solvable in exponential space for IO protocols. 
This is the first syntactically defined, interesting class of protocols for which an algorithm 
not requiring Petri net reachability is found. 
\end{abstract}

\section{Introduction}

Population protocols~\cite{DBLP:conf/podc/AngluinADFP04,DBLP:journals/dc/AngluinADFP06}
are a model of distributed, concurrent computation by anonymous, identical finite-state agents. 
They capture the essence of distributed computation in different areas. 
In particular, even though they were introduced to  model networks of passively mobile sensors, 
they are also being studied in the context of natural computing~\cite{Navlakha:2014:DIP:2688498.2678280,DBLP:conf/sss/ChatzigiannakisMS10}. 
They also exhibit many common features with Petri nets, another fundamental model of concurrency.

A protocol has a finite set of states $Q$ and a set of transitions of the form \((q, q') \mapsto (r, r')\), where \(q, q', r, r' \in Q\). 
If two agents are in states, say, \(q_1\) and \(q_2\), and the protocol has a transition of the form \((q_1, q_2) \mapsto (q_3, q_4)\), 
then the agents can interact and simultaneously move to states \(q_3\) and \(q_4\). 
Since agents are anonymous and identical, the global state of a protocol is completely determined by the number of agents at each local state, called a \emph{configuration}. 
A protocol \emph{computes} a boolean value for a given initial configuration if in all fair executions starting at it, all agents eventually agree to this value%
\footnote{An execution is fair if it is finite and cannot be extended, or it is infinite and satisfies the following condition: if \(C\) appears infinitely often in the execution, then every step enabled at \(C\) is taken infinitely often in the execution.}%
---so, intuitively, population protocols compute by reaching a stable consensus. 
Observe that a protocol may compute no value for some initial configuration, in which case it is deemed not \emph{well-specified}~\cite{DBLP:conf/podc/AngluinADFP04}.
 
Population protocols are parameterized systems. 
Every initial configuration yields a different finite-state instance of the protocol, and the specification is a 
global property of the infinite family of protocol instances so generated. 
More precisely, the specification is a predicate $P(x)$ stipulating the boolean value $P(C)$ that the protocol must compute from the initial configuration $C$. 

Initial verification efforts for verifying population protocols studied the problem of checking if $P(x)$ is correctly computed for 
a \emph{finite} set of initial configurations, a task within the reach of finite-state model checkers. 
In 2015 we obtained the first positive result on parameterized verification \cite{DBLP:conf/concur/EsparzaGLM15}. 
We showed that the problem of deciding if a given protocol is well-specified for all initial configurations is decidable. 
The same result holds for the correctness problem: given a protocol and a predicate, deciding if the protocol is well-specified and computes the predicate. 
Unfortunately, we also showed \cite{DBLP:conf/concur/EsparzaGLM15,esparza_et_al:LIPIcs:2016:6862} that both problems are as hard as the reachability problem for Petri nets. 
Since all known algorithms for Petri net reachability run in non-primitive recursive time in the worst case, the applicability of this result is limited. 

In this paper we initiate the investigation of subclasses of protocols with a more tractable well specification and correctness problems. 
We focus on the subclass of \emph{immediate observation} protocols (IO protocols), introduced and studied by
Angluin \etal{} \cite{DBLP:journals/dc/AngluinAER07}. 
These are protocols whose transitions have the form $(q_1, q_2) \mapsto (q_1, q_3)$. 
Intuitively, in an IO protocol an agent can change its state from $q_2$ to $q_3$ by \emph{observing} that another agent is in state $q_1$. 
This yields an elegant model of protocols in which agents interact through \emph{sensing}: 
If an agent in state $q_2$ senses the presence of  another agent in state $q_1$, then it can change its state to $q_3$. 
The other agent typically does not even know that it has been sensed, and so it keeps its current state. 
They also capture the notion of catalysts in chemical reaction networks.

Angluin \etal{} focused on the expressive power of IO protocols.
Our main result is that for IO protocols, both the well specification and correctness problems can be solved in \EXPSPACE{} (we also show the problem
is \PSPACE-hard). 
This is the first time that the verification problems of a substantial class of protocols are proved to be solvable in elementary time.
To ensure elementary time, our proof uses techniques significantly different from previous results \cite{DBLP:conf/concur/EsparzaGLM15}.
The key to our result is the use of \emph{counting constraints} to symbolically represent possibly infinite (but not necessarily upward-closed)
sets of configurations.  
A counting constraint is a boolean combination of atomic threshold constraints of the form $x_i \geq k$. 
We prove that, contrary to the case of arbitrary protocols, the set of configurations reachable from a counting set 
(the set of solutions of a counting constraint) is again a counting set and we characterize the complexity of 
representing this set.
We believe that this result can be of independent interest for other parameterized systems.

Angluin \etal \cite{DBLP:journals/dc/AngluinAER07} proved that 
IO protocols compute exactly the predicates represented by counting constraints.  
Our main theorem yields a new proof of this result as a corollary. 
But it also goes further. 
Using our complexity results, we can provide a lower bound on the state complexity of IO protocols, 
i.e., on the number of states necessary to compute a given predicate. 
These results complement recent bounds obtained for arbitrary protocols \cite{blondin_et_al:LIPIcs:2018:8511}.

\section{Immediate Observation Population Protocols}%
\label{sec:preliminaries}
\subsection{Preliminaries} 

A \emph{multiset} on a finite set \(E\) is a mapping \(C \colon E \rightarrow \N\), 
thus, for any $e\in E$,
\(C(e)\) denotes the number of occurrences of element \(e\) in \(C\). 
Operations on \(\N\) like addition, subtraction, or comparison, are extended to multisets by
defining them component wise on each element of \(E\). 
Given $e \in E$, we denote by $\vec{e}$ the multiset consisting of one occurrence of element
$e$, that is, the multiset satisfying $\vec{e}(e)=1$ and 
$\vec{e}(e')=0$ for every $e' \neq e$. 
Given \(E'\subseteq E\) define \(C(E')\defeq\sum_{e\in E'} C(e)\).
Given a total order $e_1 \prec e_2 \prec \cdots \prec e_n$ on $E$, a multiset $C$ can be 
equivalently represented by the vector $(C(e_1), \ldots, C(e_n))\in \N^n$.

\subsection{Protocol Schemes}

A \emph{protocol scheme} $\mathcal{A} = (Q, \Delta)$ consists of a finite 
non-empty set $Q$ of states and a set $\Delta \subseteq Q^4$.
If $(q_1,q_2, q_1',q_2')\in \Delta$, we write $(q_1,q_2) \mapsto (q_1',q_2')$ and call it 
a \emph{transition}. 

\emph{Confugurations} of a protocol scheme $\mathcal{A}$ are given by \emph{populations}.
A population \(P\) is a multiset on \(Q\) with at least two elements, i.e., \(P(Q) \geq 2\).
The set of all populations is denoted $\pop{Q}$.
Intuitively, a configuration $C\in \pop{Q}$ describes a collection of identical finite-state {\em agents} 
with $Q$ as set of states, containing $C(q)$ agents in state $q$.

Pairs  of agents \emph{interact} using transitions from $\Delta$.
Formally, given two configurations $C$ and $C'$ and a transition $\delta =(q_1,q_2)\mapsto (q_1',q_2')$, we write $C \xrightarrow{\delta} C'$ if 
\[C \geq (\vec{q}_1 + \vec{q}_2) \text{ holds, and } C' = C - (\vec{q}_1 + \vec{q}_2) + (\vec{q}'_1 + \vec{q}'_2) \enspace .\]
(Recall that $\vec{q}$ is the multiset consisting only of one occurrence of $q$.)
From the definition of interaction, it is easily seen that, inside the tuple \( (q_1,q_2,q_1',q_2') \in \Delta \), the ordering between \(q_1\) and \(q_2\) and between \(q_1'\) and \(q_2'\) is irrelevant. 
We write $C\xrightarrow{w}C'$ for a sequence $w=\delta_1\ldots\delta_k$ of transitions if there exists a sequence $C_0,\ldots,C_k$ of configurations satisfying $C=C_0\xrightarrow{\delta_1}C_1\cdots\xrightarrow{\delta_k}C_k=C'$.
We also write $C\rightarrow C'$ if $C\xrightarrow{\delta}C'$ for some transition $\delta \in \Delta$, and call $C\rightarrow C'$ an \emph{interaction}. 
We say that $C'$ is \emph{reachable from $C$} if $C\xrightarrow{w}C'$ for some (possibly empty) sequence $w$ of transitions.

Note that transitions are enabled only when there are at least two agents.
This is why we assume that populations have at least two elements.

An \emph{execution} of $\mathcal{A}$ is a finite or infinite sequence of configurations $C_0, C_1, \ldots$ such that $C_i \rightarrow C_{i+1}$ for each $i\geq 0$.
An execution $C_0, C_1, \ldots$ is {\em fair} if it is finite and cannot be extended, or it is infinite and for every step $C \rightarrow C'$, if $C_i =C$ for infinitely many indices $i \geq 0$, then $C_j = C$ and $C_{j+1} = C'$ for infinitely many indices $j \geq 0$ \cite{DBLP:conf/podc/AngluinADFP04,DBLP:journals/dc/AngluinADFP06}.
Informally, if \(C\) appears infinitely often in a fair execution, then every step enabled at \(C\) is taken infinitely often in the execution.

Given a set $S$ of configurations and a transition $t$ of a protocol scheme $(Q, \Delta)$, we define:
\begin{itemize}
\item $\post[t](S)  \defeq \{ C' \mid C \trans{t} C' \mbox{ for some $C \in S$} \}$ and $\post(S) \defeq \bigcup_{t \in \Delta} \post[t](S)$.
\item $\post^0(S) \defeq S$; $\post^{i+1}(S) \defeq \post(\post^i(S))$ for every $i \geq 0$; and  $\post^*(S) \defeq \bigcup_{i \geq 0} \post^i(S)$.
\end{itemize}
We also define $\pre[t](S)  \defeq \{ C' \mid C' \trans
{t} C \mbox{ for some $C \in S$} \}$. The sets
$\pre(S)$ and $\pre^*(S)$ are defined as above for $\post$.

\subsubsection{Immediate Observation Protocol Schemes }

A protocol scheme is \emph{immediate observation} (IO) if all its transitions are immediate observation.
A transition \((q_1,q_2) \mapsto (q'_1, q'_2)\) is immediate observation if{}f \(\{q_1,q_2\} \cap \{q'_1,q'_2\} \neq \emptyset\).
Consider, for instance, a transition \( (q_s,q_o,q_d,q_o) \) where \(q_s, q_o\) and \(q_d\) are all distinct. 
Observe that the transition is immediate observation since \(\{q_s,q_o\} \cap \{q_d,q_o\} = \{q_o\} \neq \emptyset\).
Intuitively, in an interaction specified by an immediate observation transition, one agent observes the state of another and updates it own state, but the observed agent remains as it was (and its state, unmodified by the interaction, is given by \(\{q_1,q_2\} \cap \{ q'_1,q'_2 \}\)). 
Other typical examples of immediate observation transitions are \((q_o,q_o,q_d,q_o)\), \((q_s,q_o,q_o,q_o)\) \( (q_s,q_o,q_s,q_o)\) and \( (q_o,q_o,q_o,q_o)\) where \(q_s, q_o\) and \(q_d\) are all distinct.
Note that in the last two cases, the state of two agents are the same before and after interacting.

\subsection{Population Protocols}

As Angluin \etal\cite{DBLP:conf/podc/AngluinADFP04}, we consider population protocols as a computational model, computing predicates $\Pi\colon \pop{\Sigma} \rightarrow \set{0,1}$, where $\Sigma$ is a non-empty, finite set of \emph{input variables}.

An \emph{input mapping} for a protocol scheme $\mathcal{A}$ is a function $I \colon \pop{\Sigma} \rightarrow \pop{Q}$ that maps each input population $X \in \pop{\Sigma}$ to a configuration of $\mathcal{A}$.
The set of \emph{initial configurations} is $\mathcal{I} = \{I(X) \mid X\in \pop{\Sigma}\}$. 
An input mapping $I$ is \emph{Presburger} if the set of pairs $(X, C) \in \pop{\Sigma} \times \pop{Q}$ 
such that $C=I(X)$ is definable in Presburger arithmetic.
An input mapping \(I\) is \emph{simple} if there is an injective map $\nu\colon \Sigma \rightarrow Q$ such that $I(X) = \sum_{\sigma\in \Sigma} X(\sigma) \vec{\nu(\sigma)}$.
That is, each input variable is assigned a (distinct) state, and a population $X$ over $\Sigma$ is assigned the initial configuration consisting of $X(\sigma)$ agents in the state $\nu(\sigma)$ and no other agents.
Unless otherwise specified, we restrict our attention to the class of \emph{simple} input mappings.

An \emph{output mapping} for a protocol scheme is a function $O\colon Q\rightarrow\{0,1\}$ that associates to each state $q$ of $\mathcal{A}$ an output value in $\{0,1\}$.
The output mapping induces the following properties on configurations: a configuration $C$ is a
\begin{itemize}
	\item \emph{$b$-consensus} for $b \in \{0,1\}$ if \(\sum_{p\in O^{-1}(1-b) } C(p) = 0\) and a \emph{consensus} if it is a \(b\)-consensus for some \(b\);
	\item \emph{dissensus} if it is a \(b\)-consensus for no \(b\) (that is \(C\) is a dissensus if \(\sum_{p\in O^{-1}(b) } C(p) > 0\) and \(\sum_{p\in O^{-1}(1-b) } C(p) > 0\)).%
\end{itemize}

A {\em population protocol} is a triple $(\mathcal{A}, I,O)$, where $\mathcal{A}$ is a protocol scheme, $I$ is a simple input mapping, and $O$ is an output mapping.
The population protocol is \emph{immediate observation} (IO) if $\mathcal{A}$ is immediate observation.

An execution $C_0,C_1,\ldots$ \emph{stabilizes to $b$} for a given $b\in\{0,1\}$ if 
there exists $n\in\N$ such that $C_m$ is a \(b\)-consensus for every $m\geq n$ (if the execution is finite, then this means for every $m$ between $n$ and the length of the execution). 
Notice that there may be many different executions from a given configuration $C_0$, each of which may stabilize to $0$ or to $1$ or not stabilize at all (by visiting infinitely many dissensus or infinitely many \(0\) and \(1\) consensus). 

A population protocol $(\mathcal{A},I,O)$ is {\em well-specified} if for every input configuration $C_0\in \mathcal{I}$, every fair execution of $\mathcal{A}$ starting at $C_0$ stabilizes to the same value \(b \in \{0,1\}\). 
Otherwise, it is \emph{ill-specified}.  
The {\em well specification problem} asks if a given population protocol is well-specified?

Finally, a population protocol $(\mathcal{A},I,O)$ \emph{computes} a predicate $\Pi \colon \pop{\Sigma} \rightarrow \set{0,1}$ if for every $X\in\pop{\Sigma}$, every fair execution of $\mathcal{A}$ starting at $I(X)$ stabilizes to $\Pi(X)$. 
It follows easily from the definitions that a protocol computes a predicate if{}f it is well-specified. 
The \emph{correctness} problem asks, given a population protocol and a predicate whether the protocol computes the predicate.

\section{Counting Constraints and Counting Sets}

\begin{definition}
Let $X=\{x_1, \ldots, x_n\}$ be a set of variables, and let $x \in X$.
A constraint of the form
$l \leq x$, where $l \in \N$, is a \emph{lower bound}, and a constraint of the form
$x \leq u$, where $u \in \N \cup \{ \infty\}$, is an \emph{upper bound}.
A \emph{literal} is a lower bound or an upper bound. 

A \emph{counting constraint} is a boolean combination of literals. 
A counting constraint is in \emph{counting normal form} (CoNF) if it 
is a disjunction of conjunctions of literals, where each conjunction, called a \emph{counting minterm},  
contains exactly two literals for each variable, one of them an upper bound and the other a lower bound. We often write a counting constraint in CoNF as the set of its counting minterms.
\end{definition}
The semantics of a counting constraint is a \emph{counting set}, 
a set of vectors in $\N^n$ or, equivalently,  a set of valuations to the variables in $X$.
The semantics is defined inductively on the structure of a counting constraint, as expected. 
Define $\sem{l\leq x} = \set{x\mapsto m\in\N \mid m\geq l}$ (\(\sem{\infty \leq x}=\emptyset\)) and $\sem{x \leq u} = \set{x\mapsto m\in\N \mid m\leq u}$.
Disjunction, conjunction, and negation of counting constraints translates into union, intersection, and complement of counting sets.  

The following proposition follows easily from the definition of counting sets and the disjunctive normal form for propositional logic.

\begin{proposition}
\label{prop:conf}
\hspace{0pt}
\begin{enumerate}
	\item Counting sets are closed under Boolean operations.
	\item Every counting constraint is equivalent to a counting constraint in CoNF.
\end{enumerate}
\end{proposition}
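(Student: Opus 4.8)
The plan is to treat the two parts separately: the first follows immediately from how the semantics $\sem{\cdot}$ is defined, and the second from the classical disjunctive normal form theorem for propositional logic together with a small normalization step.

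For part (1), recall that a counting set is by definition a set of the form $\sem{\varphi}$ for a counting constraint $\varphi$, and that $\sem{\cdot}$ is defined so that $\sem{\varphi_1 \vee \varphi_2} = \sem{\varphi_1} \cup \sem{\varphi_2}$, $\sem{\varphi_1 \wedge \varphi_2} = \sem{\varphi_1} \cap \sem{\varphi_2}$, and $\sem{\neg \varphi_1} = \N^n \setminus \sem{\varphi_1}$. Hence, given counting sets $S_1 = \sem{\varphi_1}$ and $S_2 = \sem{\varphi_2}$, the sets $S_1 \cup S_2$, $S_1 \cap S_2$ and $\N^n \setminus S_1$ are the semantics of $\varphi_1 \vee \varphi_2$, $\varphi_1 \wedge \varphi_2$ and $\neg \varphi_1$ respectively, and are therefore counting sets.

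For part (2), I would proceed in three steps. First, push all negations down to the literals via De Morgan and eliminate them using $\neg(l \leq x) \equiv (x \leq l-1)$ for $l \geq 1$ and $\neg(x \leq u) \equiv (u+1 \leq x)$ for $u \in \N$; the remaining cases $\neg(0 \leq x)$ and $\neg(x \leq \infty)$ denote the empty set and I would replace them by a fixed unsatisfiable conjunction such as $(1 \leq x) \wedge (x \leq 0)$. This gives an equivalent, negation-free Boolean combination of lower and upper bounds. Second, apply the DNF theorem to obtain an equivalent disjunction of conjunctions of such literals. Third, normalize each conjunction $m$: for every variable $x_i$, replace all lower bounds on $x_i$ occurring in $m$ by the single literal $\ell_i \leq x_i$ with $\ell_i = \max(\{0\} \cup \{l : (l \leq x_i) \in m\})$, and symmetrically all upper bounds by $x_i \leq u_i$ with $u_i = \min(\{\infty\} \cup \{u : (x_i \leq u) \in m\})$, adding a trivial bound $0 \leq x_i$ or $x_i \leq \infty$ when none is present. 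This leaves exactly one lower and one upper bound per variable without changing $\sem{m}$, so $m$ becomes a counting minterm. Finally, unsatisfiable minterms (those with $\ell_i > u_i$ for some $i$) can be dropped from the disjunction; if this empties the disjunction, I would keep one canonical unsatisfiable minterm to represent $\emptyset$. The resulting formula is in CoNF and equivalent to the original.

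I expect no serious obstacle here; the only care needed is bookkeeping at the boundary cases --- negations of the trivial bounds $0 \leq x$ and $x \leq \infty$, and ensuring the ``exactly two literals per variable'' invariant is preserved even for the extremal counting sets $\emptyset$ and $\N^n$.
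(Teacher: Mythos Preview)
Your proposal is correct and follows essentially the same approach as the paper's proof sketch: part~(1) is immediate from the semantics, and part~(2) uses DNF conversion together with the rewriting of negated literals via $\neg(l\leq x)\equiv x\leq l-1$ and $\neg(x\leq u)\equiv u+1\leq x$, followed by merging redundant bounds and padding with trivial ones. The only cosmetic difference is that the paper applies DNF first and then eliminates negations at the literal level (dropping the enclosing minterm when the negated literal is unsatisfiable), whereas you push negations down first and then apply DNF; both orderings work, and your explicit handling of the empty-disjunction case is a detail the paper's sketch leaves implicit.
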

\begin{proof}[Proof Sketch.] 
	\textsf{\textbf{1.}} Proof is easy. \textsf{\textbf{2.}} Put the constraint in disjunctive normal form. Remove negations in front of literals 
using
$\sem{\neg (x_i \leq c)}=\sem{x_i \geq c+1}$ if \(c\in\N\) and remove the enclosing minterm otherwise; and $\sem{\neg (x_i \geq c)}=\sem{x_i \leq c-1}$ if \(c\in\N\setminus \{ 0\}\) and remove the enclosing minterm otherwise.
Remove minterms containing unsatisfiable literals $l \leq x_i \wedge x_i \leq u$ with $l > u$. 
Remove redundant bounds, e.g., replace $(l_1 \leq x \wedge l_2 \leq x)$ by $\max\{l_1, l_2\} \leq x$. If a minterm does not contain a lower bound (upper bound) for $x_i$, add $0 \leq x_i$
($x_i \leq \infty$).
\end{proof}

Next, we introduce a representation of CoNF-constraints used in the rest of the paper.
\begin{definition}[Representation of CoNF-constraints]
\label{def:rep}
We represent a counting minterm by a pair $M \defeq (L, U)$  where $L \colon X \rightarrow \N$ and $U \colon X \rightarrow \N \cup \{\infty\}$ assign to each variable its lower and upper bound, respectively. 
We represent a CoNF-constraint \(\Gamma\) as the set of representations of its minterms: \(\Gamma= \{ M_1,\ldots, M_m \}\).
\end{definition}

\begin{definition}[Measures of counting constraints]
	The \emph{$L$-norm} of a counting minterm $M=(L,U)$ is $\lnorm{M} \defeq \sum_{x \in X} L(x)$, and its \emph{$U$-norm} is $\unorm{M} \defeq \sum_{\substack{x\in X\\ U(x)<\infty}} U(x)$ (and \(0\) if \(U(x)<\infty\) for no \(x\)). The $L$- and $U$-norms of a CoNF-constraint $\Gamma = \{M_1, \ldots, M_m\}$  
are $\lnorm{\Gamma} \defeq \max_{i\in [1,m]} \{ \lnorm{M_i} \}$  and $\unorm{\Gamma} \defeq \max_{i\in[1,m]} \{ \unorm{M_i} \}$. 
\end{definition}

\begin{proposition}\label{prop:oponconf}
Let $\Gamma_1, \Gamma_2$ be CoNF-constraints over $n$ variables.
\begin{itemize}
\item There exists a CoNF-constraint $\Gamma$ with $\sem{\Gamma} = \sem{\Gamma_1} \cup \sem{\Gamma_2}$ such that
$\unorm{\Gamma} \leq \max \{\unorm{\Gamma_1}, \unorm{\Gamma_2} \}$ and $\lnorm{\Gamma} \leq \max \{\lnorm{\Gamma_1}, \lnorm{\Gamma_2} \}$.
\item  There exists a CoNF-constraint $\Gamma$ with $\sem{\Gamma} = \sem{\Gamma_1} \cap \sem{\Gamma_2}$ such that
$\unorm{\Gamma} \leq \unorm{\Gamma_1} + \unorm{\Gamma_2}$ and $\lnorm{\Gamma} \leq \lnorm{\Gamma_1} + \lnorm{\Gamma_2}$.
\item There exists a CoNF-constraint $\Gamma$ with $\sem{\Gamma} = \N^n \setminus \sem{\Gamma_1}$ such that
$\unorm{\Gamma} \leq n\lnorm{\Gamma_1}$ and $\lnorm{\Gamma} \leq n\unorm{\Gamma_1} + n$.
\end{itemize}
 \end{proposition}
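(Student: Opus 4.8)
The plan is to handle the three bullets in turn, working with the set representation $\Gamma = \{M_1,\dots,M_m\}$ of a CoNF-constraint and tracking how the $L$- and $U$-norms evolve under each operation.

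\textbf{Union.} This is the easy case. If $\Gamma_1 = \{M_1,\dots,M_p\}$ and $\Gamma_2 = \{N_1,\dots,N_q\}$, take $\Gamma \defeq \Gamma_1 \cup \Gamma_2$ as sets of minterms. Since $\sem{\cdot}$ turns set-union of minterms into union of counting sets, $\sem{\Gamma} = \sem{\Gamma_1}\cup\sem{\Gamma_2}$. Each minterm of $\Gamma$ already contains exactly one lower and one upper bound per variable, so $\Gamma$ is in CoNF with no further work. The norm bounds are immediate: $\lnorm{\Gamma} = \max_i \lnorm{M_i}$ over the combined list, which is $\max\{\lnorm{\Gamma_1},\lnorm{\Gamma_2}\}$, and likewise for $\unorm{\cdot}$.

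\textbf{Intersection.} For two minterms $M=(L,U)$ and $N=(L',U')$, define $M \wedge N \defeq (\max(L,L'), \min(U,U'))$, computed componentwise; one checks $\sem{M}\cap\sem{N} = \sem{M\wedge N}$ directly from the definition of $\sem{\cdot}$ (this may produce an unsatisfiable minterm with some $L(x) > U(x)$, which we simply drop, as in the proof of Proposition~\ref{prop:conf}). Then set $\Gamma \defeq \{\, M\wedge N \mid M\in\Gamma_1,\ N\in\Gamma_2,\ M\wedge N \text{ satisfiable}\,\}$; distributivity of $\cap$ over $\cup$ gives $\sem{\Gamma} = \sem{\Gamma_1}\cap\sem{\Gamma_2}$, and each resulting minterm is still in CoNF. For the norms, for each variable $x$ we have $\max(L(x),L'(x)) \leq L(x) + L'(x)$, so summing over $x$ gives $\lnorm{M\wedge N} \leq \lnorm{M} + \lnorm{N} \leq \lnorm{\Gamma_1} + \lnorm{\Gamma_2}$; and $\min(U(x),U'(x)) < \infty$ only if $U(x)<\infty$ or $U'(x)<\infty$, in which case $\min(U(x),U'(x)) \leq U(x) + U'(x)$ (reading an absent finite bound as contributing its own value, the other as $0$), so $\unorm{M\wedge N} \leq \unorm{M} + \unorm{N} \leq \unorm{\Gamma_1}+\unorm{\Gamma_2}$. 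Taking the max over all surviving minterms preserves these bounds.

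\textbf{Complement.} This is the main obstacle, because complementation is where CoNF blows up and where the norm bookkeeping is least transparent. The approach: write $\sem{\Gamma_1} = \bigcup_{i=1}^m \sem{M_i}$, so $\N^n\setminus\sem{\Gamma_1} = \bigcap_{i=1}^m (\N^n\setminus\sem{M_i})$ by De Morgan. For a single minterm $M_i = (L,U)$, its complement is $\bigcup_{x\in X}\bigl(\sem{x \leq L(x)-1} \cup \sem{U(x)+1 \leq x}\bigr)$ — the negation of a conjunction of the $2n$ literals of $M_i$. Each of these disjuncts is a single literal; completing it to a full minterm over all $n$ variables (adding $0 \leq x_j$ and $x_j \leq \infty$ for the other variables, as in Proposition~\ref{prop:conf}) gives a CoNF-constraint $\overline{M_i}$ whose minterms each have $L$-norm at most $L(x)$ for some $x$, hence $\lnorm{\overline{M_i}} \leq \lnorm{M_i}$, and $U$-norm at most $U(x)+1$ for some finite $U(x)$, hence $\unorm{\overline{M_i}} \leq \unorm{M_i}+1$; note the literal $\sem{x \leq L(x)-1}$ contributes $L(x)-1 \leq \lnorm{M_i}$ to the $U$-norm, so actually $\unorm{\overline{M_i}} \leq \max\{\unorm{M_i}+1, \lnorm{M_i}\}$, which is $\leq \lnorm{M_i} + \unorm{M_i} + 1$. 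Now intersect the $m$ constraints $\overline{M_1},\dots,\overline{M_m}$ using the intersection construction $m-1$ times; by the second bullet the $L$-norm of the result is at most $\sum_{i=1}^m \lnorm{\overline{M_i}} \leq \sum_i \lnorm{M_i} \leq m\cdot\lnorm{\Gamma_1} \leq n\lnorm{\Gamma_1}$ — wait, one must be careful that $m$ here is the number of minterms of $\Gamma_1$, not $n$; so I would re-examine whether the intended statement bounds things in terms of the number of \emph{variables} via a normalization step that merges or bounds the number of minterms, or whether $\lnorm{\Gamma_1}$ in the hypothesis is meant to already dominate $m$. The cleanest route, and the one I expect the authors take, is: the complement of a single minterm $M_i$ over $n$ variables needs at most $2n$ literal-minterms but these can be reorganized so that the iterated intersection over $i=1,\dots,m$ only ever picks \emph{one} disjunct per $i$, and there are $n$ variables' worth of "directions," giving the bound $n\lnorm{\Gamma_1}$ for the $U$-norm and $n\unorm{\Gamma_1}+n$ for the $L$-norm after accounting for the $+1$'s. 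I would verify this by tracking, for a single surviving minterm of the final intersection, that in each coordinate $x$ its lower bound is either $0$ or one of the values $U_i(x)+1$, and its upper bound is either $\infty$ or one of the values $L_i(x)-1$; summing over the $n$ coordinates and using $U_i(x) \leq \unorm{\Gamma_1}$, $L_i(x) \leq \lnorm{\Gamma_1}$ then yields exactly $\lnorm{\Gamma} \leq n(\unorm{\Gamma_1}+1)$ and $\unorm{\Gamma} \leq n\lnorm{\Gamma_1}$.
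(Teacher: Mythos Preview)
Your proposal is correct and follows essentially the same route as the paper: union by concatenating minterms, intersection by pairwise $\max/\min$ on bounds, and complement by De~Morgan plus the literal negation $\neg(l\leq x\leq u)\equiv (0\leq x\leq l-1)\vee(u+1\leq x\leq\infty)$, then distributing. Your initial attempt to bound the complement by iterating the intersection estimate is indeed a dead end (it tracks the number $m$ of minterms, not $n$), but you catch this yourself; the coordinate-wise argument you land on---each coordinate of a surviving minterm has lower bound at most $\max_i U_i(x)+1\leq\unorm{\Gamma_1}+1$ and finite upper bound at most $\min_i L_i(x)-1\leq\lnorm{\Gamma_1}$, then sum over the $n$ coordinates---is exactly the paper's argument (stated there in one sentence), with the minor caveat that the bound comes from the \emph{max} of the relevant $U_i(x)+1$ rather than ``one of'' them, though this does not affect the estimate.
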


\begin{proof}
Remember that a CoNF constraint for $m$ minterms in dimension $n$ is a $m$-disjunction of $n$-conjunctions, and that the $L$-norm (respectively $U$-norm) is the maximum sum of lower (resp. upper) bounds in one conjunction. The union of two counting sets $\Gamma_1,\Gamma_2$ with CoNF constraints is represented by the disjunction of the two constraints, and it is still CoNF so the result follows. The intersection is represented by a conjunction of the two constraints and so is not CoNF and needs to be rearranged as in Proposition \ref{prop:conf}. The new $n$-conjunctions of literals (i.e. the new minterms) mix unmodified bounds from $\Gamma_1$ and $\Gamma_2$, so the result follows. The complement is represented by the negation of the original constraint, which we rearrange into CoNF using $\lnot (l \leq x \leq u) \equiv (0\leq x \leq l -1) \vee (u+1\leq x \leq \infty)$. We obtain $n$-conjunctions with lower bounds of the form $u+1$, with $u \leq \unorm{\Gamma_1}$ an upper bound in a minterm of the original constraint. This yields $\lnorm{\Gamma} \leq n\unorm{\Gamma_1} + n$ and the reasoning is similar for the $U$-norm.
\end{proof}

\begin{remarki}
\label{rem:specialcases}
The counting sets contain the finite, upward-closed and downward-closed sets:
\begin{itemize}
\item Every finite subset of $\N^n$ is a counting set.
Indeed, $\{ (k_1, \ldots, k_n) \} = \sem{(L,U)}$ with $L(x_i)=k_i=U(x_i)$ for every $x_i \in X$,
and so finite sets are counting sets too. 
\item A set $S \subseteq \N^n$ is upward-closed if whenever $v\in S$ and $v\leq_{\times} v'$, we have $v'\in S$, where we write $v \leq_{\times} v'$ if the ordering holds pointwise (meaning $v(x) \leq v'(x)$ for every $x \in X$).
Upward-closed sets are counting sets.
Indeed, by Dickson's lemma, every upward-closed set has a finite set $\{v_1, \ldots, v_k\}$ of minimal elements with respect to $\leq_{\times}$, and so the set is $\sem{\{(L_1, U), \ldots, (L_k, U) \}}$ where $L_i(x_j) = v_i(j)$ and $U(x_j) = \infty$ for every $1 \leq j \leq n$. 
\item  A set $S\subseteq \N^n$ is downward-closed if whenever $v\in S$ and $v'\leq_{\times} v$, we have $v'\in S$. Since a set is downward-closed if{}f its complement is upward-closed, every downward-closed set is a counting set. Further, it is easy to see that downward-closed sets are represented by counting constraints $\{(L, U_1), \ldots, (L, U_k) \}$ where $L(x_j) = 0$ for every \(1 \leq j \leq n\).
\end{itemize}
\end{remarki}

Next, we define a well-quasi-ordering on counting sets.
For two counting minterms $M_1$ and $M_2$, we write $M_1 \preceq M_2$ if $\sem{M_1} \supseteq \sem{M_2}$.
For CoNF-constraints $\Gamma_1$ and $\Gamma_2$, define the ordering $\Gamma_1\sqsubseteq \Gamma_2$ if for each counting minterm
$M_2\in \Gamma_2$ there is a counting minterm $M_1\in \Gamma_1$ such that $M_1\preceq M_2$.
Note that $\Gamma_1\sqsubseteq \Gamma_2$ implies $\sem{\Gamma_1} \supseteq\sem{\Gamma_2}$.

\begin{theorem}
\label{thm:wqo}
For every $u \geq 0$, the ordering \(\sqsubseteq\) on counting sets represented by CoNF-constraints of $U$-norm at most $u$ is a well-quasi-order.
\end{theorem}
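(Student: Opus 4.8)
The plan is to show that any infinite sequence $\Gamma_0, \Gamma_1, \Gamma_2, \ldots$ of CoNF-constraints, each with $U$-norm at most $u$, contains two indices $i < j$ with $\Gamma_i \sqsubseteq \Gamma_j$. The natural strategy is to reduce $\sqsubseteq$ on constraints to a well-known well-quasi-order via the fact that finite multisets (or finite sequences) over a wqo, ordered by domination, form a wqo (Higman's lemma / the standard lemma that $\mathcal{P}_{\mathrm{fin}}(A)$ with the Hoare-style ordering is a wqo when $A$ is). Concretely, I would identify each $\Gamma$ with the finite set of its minterms, and show that the minterm ordering $\preceq$ on the set of minterms of $U$-norm at most $u$ is itself a wqo; then $\sqsubseteq$ is exactly the induced "for every element on the right there is a smaller element on the left" ordering (a Hoare / domination ordering), which is a wqo whenever the underlying order is — this is the key closure lemma I would invoke.

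So the real content is: \emph{the set of counting minterms with $U$-norm at most $u$, ordered by $\preceq$, is a wqo}. Here I would represent a minterm $M = (L,U)$ over $n$ variables by the pair of vectors $(L, U) \in \N^n \times (\N \cup \{\infty\})^n$, and recall that $\sem{M_1} \supseteq \sem{M_2}$ holds iff $L_1 \leq_\times L_2$ and $U_1 \geq_\times U_2$ pointwise (a larger set means weaker lower bounds and weaker upper bounds). Thus $\preceq$ is the product of the pointwise order $\leq_\times$ on $\N^n$ (for the $L$-components) and the reverse pointwise order on $(\N\cup\{\infty\})^n$ (for the $U$-components). By Dickson's lemma $(\N^n, \leq_\times)$ is a wqo. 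For the $U$-components, the crucial point is the bound $u$: since $\unorm{M} = \sum_{U(x)<\infty} U(x) \leq u$, each finite entry of $U$ lies in $\{0,1,\ldots,u\}$, so every coordinate of $U$ takes values in the \emph{finite} set $\{0,\ldots,u,\infty\}$; hence $(\{0,\ldots,u,\infty\})^n$ under any order, in particular the reverse pointwise order, is a wqo (a finite set is trivially a wqo, and finite products of wqos are wqos by Dickson-type reasoning). A finite product of wqos is a wqo, so $\preceq$ restricted to minterms of $U$-norm $\leq u$ is a wqo.

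Assembling: minterms of bounded $U$-norm form a wqo under $\preceq$; the domination ordering on finite subsets of a wqo is again a wqo; and a CoNF-constraint of $U$-norm at most $u$ is precisely a finite set of such minterms, with $\sqsubseteq$ being the domination ordering. Hence $\sqsubseteq$ is a wqo on this class, which is the claim. I would also note that "wqo on counting sets" makes sense because $\sqsubseteq$ need not be antisymmetric — but a wqo is only required to be a quasi-order with no infinite antichain and no infinite strictly descending chain, so this is not an issue; if one prefers, one can pass to $\preceq$-equivalence classes of minterms (which are finite in number for each fixed $u$ and $n$) to get genuine antisymmetry.

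The main obstacle is making sure the $U$-norm bound is used correctly: without it, a coordinate of $U$ could range over all of $\N \cup \{\infty\}$, and the reverse order on $\N$ is \emph{not} a wqo (it has the infinite descending chain $0 > \text{``}$no, wait$\text{''}$) — actually the reverse order on $\N$ has infinite \emph{descending} chains like $0,1,2,\ldots$ being ascending forward but the forward order on $\N$ with $\geq$ has $\ldots$; the precise issue is that $(\N, \geq)$ has the infinite antichain-free but infinitely-descending sequence $0 > $ nothing below $0$, so it fails the descending chain condition. The bound $u$ collapses the relevant value set to a finite one, sidestepping this entirely. Verifying that $\preceq$ is exactly this product order (the "weaker bounds $\Leftrightarrow$ bigger set" equivalence) is the one genuinely-new-to-check step; everything else is a routine application of Dickson's lemma and the finite-subsets-of-a-wqo lemma.
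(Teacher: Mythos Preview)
Your strategy---show that minterms of bounded $U$-norm form a wqo under $\preceq$, then lift to finite sets of minterms under $\sqsubseteq$---matches the paper's, and your analysis of $\preceq$ (the $U$-component ranges over a finite set because of the norm bound, Dickson handles the $L$-component in $\N^n$) is correct. The gap is in the lifting step: you claim that $\mathcal{P}_{\mathrm{fin}}(A)$ under the Hoare/domination ordering is a wqo whenever $A$ is, attributing this to Higman's lemma, but that claim is false in general. Rado's wqo on $\{(m,n)\in\N^2:m<n\}$, ordered by $(m,n)\le(m',n')$ if{}f ($m=m'$ and $n\le n'$) or $n<m'$, is the standard counterexample: the finite sets $A_k=\{(i,k):i<k\}$ form an infinite bad sequence for the Hoare ordering, since for $m<n$ the element $(m,n)\in A_n$ is dominated by no $(j,m)\in A_m$. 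Higman's lemma concerns the \emph{embedding} order on finite sequences (an injection from left into right with pointwise increase), not the ordering $\sqsubseteq$, which requires every element on the right to be dominated by some---possibly reused---element on the left; only the former is preserved by arbitrary wqos.

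The paper fills exactly this gap by upgrading from wqo to \emph{better}-quasi-order. The minterm order is a product of $(\N^n,\leq_\times)$ with a finite poset, and both are bqos, so the product is a bqo; and bqos \emph{are} closed under the Hoare powerset construction---this closure property is the very reason Nash-Williams introduced bqos. The paper cites Abdulla--Nyl\'en for this step. Your argument becomes correct once you replace ``wqo'' by ``bqo'' at the minterm level and invoke the appropriate closure theorem; everything else you wrote stands.
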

\begin{proof}
We first prove that counting minterms with $\preceq$ form a better quasi order.
For two counting minterms $M_1$ and $M_2$, we write $M_1 \preceq M_2$ if $\sem{M_1} \supseteq \sem{M_2}$.
Let ${\cal M} = M_1, M_2, \ldots$ be an infinite sequence of counting minterms of $U$-norm at most $u$, where $M_i=(L_i, U_i)$. 
Since there are only finitely many mappings $U \colon X \rightarrow \N \cup \{\infty\}$ of norm at most $u$, 
the sequence ${\cal M}$ contains an infinite subsequence ${\cal M}'$ such that every minterm $M_i$ of ${\cal M}'$ satisfies $U_i=U$ for some mapping $U$. 
So ${\cal M}'$ is of the form $(L_1,U),  (L_2, U) \ldots$
By Dickson's lemma, there are $i < j$ such that $L_i \leq_{\times} L_j$ , and so $\sem{(L_i, U)}  \supseteq \sem{(L_j,U)}$.
Hence, defining $C$ be the set of all counting minterms of $U$-norm at most $u$ we find that $(C, \preceq)$ is a well-quasi-order.
In fact, standard arguments show that this is a better-quasi-order \cite{AbdullaNylenLICS00}.
Hence, the ordering $\sqsubseteq$ is a better quasi order on counting constraints \cite{AbdullaNylenLICS00}, implying it is also a well-quasi-order.
\end{proof}
 
\section{Reachability Sets of IO Population Protocols}

We show that if $S$ is a counting set, then $\post^*(S)$ and $\pre^*(S)$ are also counting sets. 
First we show that we can restrict ourselves to IO protocols in a certain normal form.

\subsection{A Normal Form for Immediate Observation Protocols}%
\label{sub:a_normal_form_for_immediate_observation_protocols}

An IO protocol is in \emph{normal form}  if \(q_s \neq q_o\) for every transition \((q_s, q_o) \mapsto (q_o, q_d)\), i.e., the state of the 
observed agent is different from the source state of the observer. 

Given an  IO population protocol \(\PP=(\mathcal{A},I,O)\) we define an IO protocol in normal form  \(\PP'=(\mathcal{A}', I', O')\) 
which is well-specified if{}f \(\PP\) is well-specified.  Further, the number of states and transitions of $\PP'$ is linear in the number of states 
and transitions of $\PP$. The mapping \(I'\) is a Presburger mapping even if \(I\) is simple, but this does not affect our results.

\(\PP'\) is defined adding transition and states to \(\PP\).  First we add a state \(r\).
Then, we replace each transition \(t=(q,q) \mapsto (q,q_d)\) of \(\PP\) by a transition 
\(t'=(q',q) \mapsto (q',q_d)\), where \(q'\) is a primed copy of \(q\), and add 
two further transitions \( (q,r) \mapsto (r,q') \) and \( (q',r) \mapsto (r,q) \).

It remains to define the output function of the new states as well as the input mapping \(I'\) of \(\PP'\).
We define \(I'\) to be a Presburger initial mapping which coincides with \(I\) on the state of $\PP$ and 
such that \(I(X)(r)=1\) for all \(X\) and \(I(X)(q')=0\) for all \(X\) and primed state \(q'\).
The output of primed copies is the same as their unprimed version, that is  \(O(q')=O(q)\).
The only technical difficulty is the definition of the output of state $r$. Because of the way in which we have defined
the transitions involving $r$, the agent initially in state $r$ cannot leave $r$. Therefore, whatever the output $O(r)$ we assign to $r$,
the protocol $\PP'$ can never reach consensus $1- O(r)$, and so \(\PP'\) may not be well-specified even if $\PP$ is. 
To solve this problem, we add a primed copy \(r'\) of \(r\) such that \(r\) and \(r'\) have distinct outputs.
Every transition with \(r\) as observer is duplicated but this time with \(r'\) as observed state.  
Finally, for every state \(q\) of \(\PP\), if \(O(q)=O(r')\)  we add the transition
\( (q,r) \mapsto (q,r')\), and  otherwise we add the transition \( (q,r') \mapsto (q,r)\).
After adding these states, the agent initially in $r$ switches between $r$ and $r'$, and finally stabilizes to the same value the other agents stabilize to.

\subsection{The Functions \texorpdfstring{$\pre^*$}{pre*} and \texorpdfstring{$\post^*$}{post*} Preserve Counting Sets}

We show that if $S$ is a counting set, then $\post^*(S)$ and $\pre^*(S)$ are also counting sets. 
Further, given a CoNF-constraint $\Gamma$ representing $S$, we show how to construct a CoNF-constraint
representing $\post^*(S)$ and $\pre^*(S)$. 
In the following, we abbreviate $\post(\sem{\Gamma})$ to $\post(\Gamma)$, and similarly for other notations
involving $\post$ and $\pre$, like $\post[t](\Gamma)$, $\post^*(\Gamma)$, etc. 

We start with some simple examples. 
First, we observe that the result does not hold for arbitrary population protocols. 
Consider the protocol with four distinct states $\{q_1, q_2, q_3, q_4\}$ and one single transition $(q_1, q_2) \mapsto (q_3, q_4)$. Let $M = \sem{0\leq x_3 \leq 0 \wedge 0\leq x_4 \leq 0}$. Then $\post^*(M) = \sem{x_3 = x_4}$, which is not 
a counting set. Intuitively, the reason is that the transitions links the number of agents in states $x_3$ and $x_4$. However, this is only possible because the transition is not IO. Indeed, consider now the protocol $\PP_1$ with states $\{q_1, q_2, q_3\}$ and one single IO transition $(q_1, q_2) \mapsto (q_1, q_3)$. Table \ref{tab:example} lists some typical constraints for $M$, and gives constraints for 
$\post^*(M)$.

\begin{table}[t]
\newcommand{\T}{\rule{0pt}{2.6ex}} %
\newcommand{\B}{\rule[2.6ex]{-2.6pt}{2.6pt}} %
\centering
{\small
	\begin{tabular}{>{$}c<{$}|@{}>{$}c<{$}@{}|@{}>{$}c<{$}@{}|>{$}c<{$}|@{}>{$}c<{$}@{}|@{}>{$}c<{$}@{}}
     \hline 
		 M & \lnorm{M} & \unorm{M} & \Gamma\defeq\post^*[t](M) \text{ where } t\defeq (q_1, q_2) \mapsto (q_1, q_3) & \lnorm{\Gamma} & \unorm{\Gamma}\B \\ 
	\hline \T \B x_1 = 0 \wedge x_2 \geq 2 \wedge x_3 =1 & 3 & 1 & x_1 = 0 \wedge x_2 \geq 2 \wedge x_3 =1 &  3 & 1\\[0.1cm]
	x_1 = 1 \wedge x_2 = 2 \wedge x_3 \geq 1 & 4 & 3  & \begin{array}[t]{c} (x_1 =1 \wedge x_2 = 2 \wedge x_3 \geq 1) \\ \vee (x_1 =1 \wedge x_2 = 1 \wedge x_3 \geq 2) \\ \vee  (x_1 =1 \wedge x_2 = 0 \wedge x_3 \geq 3) \end{array} &  4 & 3 \\[0.1cm]
	x_1 = 1 \wedge x_2 \geq 1 \wedge x_3 =2 & 4 & 3 &  \begin{array}[t]{c} (x_1 = 1 \wedge x_2 \geq 1 \wedge x_3 = 2) \\ \vee  (x_1 = 1 \wedge x_2 \geq 0 \wedge x_3 \geq 3) \end{array} & 4 & 3 \\[0.1cm]
	x_1 \geq 0  \wedge x_2 \geq 1 \wedge x_3 \geq 2 & 3 & 0 & \begin{array}[t]{c}
		(x_1 \geq 0  \wedge x_2 \geq 1 \wedge x_3 \geq 2) \\ \vee (x_1 \geq 1  \wedge x_2 \geq 0 \wedge x_3 \geq 3)
	\end{array}  & 4 & 0 \\[0.1cm]  
	\hline\T
	M & \lnorm{M} & \unorm{M} & \Gamma\defeq\post^*[t](M) \text{ where } t\defeq (q_1, q_2) \mapsto (q_2, q_2) & \lnorm{\Gamma} & \unorm{\Gamma} \B\\ 
     \hline \T 
		 x_1 \geq 1 \wedge x_2 = 0 & 1  & 0 & x_1 \geq 1 \wedge x_2 = 0 & 1 & 0 \\[0.1cm]
		 x_1 = 1 \wedge x_2 \geq 2  & 3 & 1 & (x_1 = 1 \wedge x_2 \geq 2) \vee (x_1=0 \wedge x_2 \geq 3) & 3 & 1 \\[0.1cm]
		 x_1 \geq 2 \wedge x_2 = 1  & 3 & 1 & \begin{array}[t]{c} (x_1 \geq 2 \wedge x_2 \geq 1) \vee  (x_1 \geq 1 \wedge x_2 \geq 2) \\ \vee (x_1 \geq 0 \wedge x_2 \geq 3) \end{array} & 3 & 0 \\ 
	\hline 
	\end{tabular}
	}
	\caption{The set \(\post^*[t](M)\) for two IO transitions and counting minterm \(M\). For conciseness and clarity we use equality constraints instead of two inequalities. }
\label{tab:example}
\end{table}

Given a minterm $(L, U)$, we syntactically define a CoNF-constraint  $\fire{(L, U)}$ for the set:
\[\post^*[t](L,U) \defeq \{ C' \mid \exists k \geq 0 \exists C \in \sem{(L, U)} \text{ such that } C \trans{t^k} C' \}\enspace .\]
\noindent %
That is, $\fire{(L, U)}$ captures the set of all configurations that can be obtained from $(L,U)$ by firing transition $t$ an arbitrary number of times.

\begin{definition}
\label{def:fire}
Let $(L, U)$ be a minterm and let $t = (q_s, q_o) \mapsto (q_d, q_o)$ be an IO transition. 
Define $\fire{(L, U)}$ to be the set given by \( (L,U) \) and all the minterms \((L',U')\) such that all the following conditions hold:
\begin{enumerate}
\item \(\sem{(L'',U)}\neq\emptyset\) where \(\sem{L''} = \sem{L} \cap \sem{x_s\geq 1\land x_o\geq 1} \).
\item $U'(x) = U(x)$ and $L'(x) = L''(x)$ for every $x \in X \setminus \{x_s, x_d\}$.
\item If $U(x_s) < \infty$, then there exists $1\leq k \leq U(x_s)$ such that $U'(x_s) = U(x_s)-k$, $L'(x_s) = \max\{ 0, L''(x_s) - k\}$,
$U'(x_d) = U(x_d)+k$ and $L'(x_d) = L''(x_d)+k$. 
\item If $U(x_s) = \infty$, then $U'(x_s) = U'(x_d) =\infty$ and there exists $1\leq k \leq L''(x_s)$ such that $L'(x_s) = L''(x_s) - k$
and $L'(x_d) = L''(x_d)+k$.
\end{enumerate}
Given a CoNF-constraint $\Gamma=\{M_1, \ldots, M_m\}$, we define $\fire{\Gamma} = \bigcup_{i=1}^m \fire{M_i}$.
\end{definition}

\begin{lemma}
\label{lem:postsize}
Let $\PP$ be an IO protocol and let $\Gamma$ be a CoNF-constraint. Then $\fire{\Gamma} = \post^*[t](\Gamma)$.
Further, $\unorm{\fire{\Gamma}} \leq \unorm{\Gamma}$.
\end{lemma}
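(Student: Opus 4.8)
The plan is to reduce to a single minterm and then prove the two inclusions of the claimed equality, reading off the $U$-norm bound on the way. Since $\post^*[t](\cdot)$ commutes with unions, $\fire{\Gamma}$ is by definition the union of the $\fire{M_i}$, and the $U$-norm of a CoNF-constraint is the maximum of the $U$-norms of its minterms, it suffices to fix one minterm $M=(L,U)$ and prove $\fire{M}=\post^*[t](M)$ together with $\unorm{M'}\le\unorm{M}$ for every minterm $M'$ of $\fire{M}$. Throughout, $t=(q_s,q_o)\mapsto(q_d,q_o)$; since $\PP$ may be taken to be in the normal form of Section~\ref{sub:a_normal_form_for_immediate_observation_protocols} we have $q_s\neq q_o$, and if $q_s=q_d$ then $t$ is a no-op, Definition~\ref{def:fire} adds no minterm, $\post^*[t](M)=\sem{M}$, and the statement is trivial, so we may also assume $q_s\neq q_d$. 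Let $L''$ be the map with $\sem{L''}=\sem{L}\cap\sem{x_s\ge 1\wedge x_o\ge 1}$ as in the definition; whenever a minterm is produced, condition~1 forces $L''\le U$ pointwise, so $\sem{(L'',U)}\subseteq\sem{M}$ and $L''(x_s),L''(x_o)\ge 1$.

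For soundness, $\fire{M}\subseteq\post^*[t](M)$: the minterm $(L,U)$ lies in $\post^*[t](M)$ by firing $t$ zero times. For a minterm $(L',U')$ produced with parameter $k$ and an arbitrary $C'\in\sem{(L',U')}$, I would choose a number $j$ of firings to ``undo'' --- $j=k$ if $U(x_s)<\infty$, and otherwise any $j$ with $\max\{k,\,C'(x_d)-U(x_d)\}\le j\le C'(x_d)-L''(x_d)$, a nonempty range since $C'(x_d)\ge L'(x_d)=L''(x_d)+k$ and $L''(x_d)\le U(x_d)$ --- and set $C\defeq C'+j\vec{q}_s-j\vec{q}_d$. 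Conditions~2--4 then give, coordinate by coordinate, that $C\in\sem{(L'',U)}$; in particular $C(x_s)\ge\max\{j,L''(x_s)\}\ge 1$ and $C(x_o)\ge L''(x_o)\ge 1$, so $t$ is enabled at $C$ and, since a firing only decreases the count in $q_s$, stays enabled through $j$ consecutive firings. As $C\trans{t^j}C'$ and $C\in\sem{M}$, we get $C'\in\post^*[t](M)$.

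For completeness, $\post^*[t](M)\subseteq\fire{M}$: take $C\in\sem{M}$ with $C\trans{t^j}C'$; the case $j=0$ is immediate. For $j\ge 1$, the state $q_s$ is occupied before each of the $j$ firings, so $C(x_s)\ge j\ge 1$, and $q_o$ loses no agent (using $q_o\neq q_s$), so $C(x_o)\ge 1$; hence $C\in\sem{(L'',U)}$, which is therefore nonempty, with $C(x_s)\ge L''(x_s)$ and $C(x_d)\ge L''(x_d)$. I would then pick the parameter $k$: $k=j$ if $U(x_s)<\infty$ (legal as $j\le C(x_s)\le U(x_s)$), and otherwise any $k$ with $\max\{1,\,j-(C(x_s)-L''(x_s))\}\le k\le\min\{L''(x_s),\,j+(C(x_d)-L''(x_d))\}$, whose nonemptiness is a routine check from $j\ge 1$, $L''(x_s)\ge 1$, $j\le C(x_s)$, $L''(x_s)\le C(x_s)$ and $L''(x_d)\le C(x_d)$. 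For that $k$, Definition~\ref{def:fire} produces a minterm $(L',U')$ containing $C'=C-j\vec{q}_s+j\vec{q}_d$: outside $\{x_s,x_d\}$ we have $C'(x)=C(x)\in[L''(x),U(x)]=[L'(x),U'(x)]$; the choice of $k$ makes the lower bounds on $x_s$ and $x_d$ fit; and when $U(x_s)<\infty$ the inequalities $C(x_s)\le U(x_s)$ and $C(x_d)\le U(x_d)$ make the upper bounds fit. Hence $C'\in\fire{M}$.

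For the $U$-norm, it suffices that $\unorm{(L',U')}\le\unorm{M}$ for every produced minterm (the minterm $(L,U)$ has $U$-norm exactly $\unorm{M}$). Coordinates outside $\{x_s,x_d\}$ keep their upper bound. If $U(x_s)<\infty$ then $U'(x_s)=U(x_s)-k$ and $U'(x_d)=U(x_d)+k$: when $U(x_d)<\infty$ the two changes cancel, and when $U(x_d)=\infty$ the coordinate $x_d$ is uncounted before and after and the $U$-norm drops by $k$. If $U(x_s)=\infty$ then $U'(x_s)=U'(x_d)=\infty$, so neither coordinate is counted afterwards while only $x_d$ could have been counted before; the $U$-norm does not increase. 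The delicate point --- and the reason Definition~\ref{def:fire} carries a parameter $k$ rather than simply tracking the firing count --- is the case $U(x_s)=\infty$: then $t$ is fireable unboundedly often, the produced minterms leave $x_s$ and $x_d$ unbounded above and shift $k$ units of lower bound from $x_s$ to $x_d$, and $k$ need not equal the number of firings. In the completeness direction $k$ may be forced strictly below $j$, and in the soundness direction $j$ strictly above $k$ (when $U(x_d)<\infty$ but $U'(x_d)=\infty$, undoing exactly $k$ firings can leave $\sem{M}$). Pinning down the feasibility ranges for $j$ and for $k$, and checking them nonempty, is the main work; the remainder is bookkeeping with the bounds in conditions~2--4.
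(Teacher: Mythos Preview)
Your proof is correct and follows the same approach as the paper's: reduce to a single minterm and verify that conditions~1--4 of Definition~\ref{def:fire} capture exactly the configurations reachable by iterating $t$, with the $U$-norm bound read off from the case split. The paper's own proof is only a sketch (it asserts that ``Condition (3--4) defines the result of firing $t$ one or more times'' and that the $U$-norm inequality ``follows immediately''), whereas you actually carry out both inclusions, including the nontrivial selection of the firing count $j$ versus the parameter $k$ in the case $U(x_s)=\infty$, and you make explicit the normal-form assumption $q_s\neq q_o$ and the degenerate case $q_s=q_d$ that the paper leaves implicit.
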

\begin{proof}
It suffices to prove that for every minterm $(L, U)$ and for every transition $t$ we have $\post^*[t](L,U)~=~\fire{(L, U)}$
and $\unorm{\fire{(L,U)}} \leq \unorm{(L,U)}$. The rest follows easily from the definitions of $\post^*$ and of a counting constraint.
 
Condition (1) holds if{}f some vector in \( \sem{(L,U)} \) enables $t$, hence \(\sem{(L'',U)}\) is the set \( \sem{(L,U)} \) of vectors minus those disabling \(t\). If no vector enables \(t\) then \(\fire{(L, U)}\) is the singleton \( \{ (L,U) \}\).
Condition (2) states that the number of agents in states other than $q_s$ and $q_d$ does not change.
Condition (3--4) defines the result of firing \(t\) one or more times.

The inequality $\unorm{\fire{(L,U)}} \leq \unorm{(L,U)}$ follows immediately from (1--4). 
Observe that $\unorm{\fire{(L,U)}} < \unorm{(L,U)}$ may hold if $U(x_s)=\infty$ and $U(x_d)< \infty$. 
\end{proof}

To prove the main theorem of the section, we introduce the following definition.

\begin{definition}\label{def:postandco}
Given a protocol $\PP$, let $S$ be a set of configurations and let \(\Gamma\) be a CoNF-constraint. 
\begin{itemize}
\item Define: $\postco(S) \defeq \bigcup_{t \in \Delta} \post^*[t](S)$; $\postco^0(S) \defeq S$ and $\postco^{i+1}(S) \defeq \postco(\postco^{i}(S))$ for every $i \geq 0$;
$\postco^*(S) \defeq \bigcup_{i \geq 0} \postco^i(S)$. 
\item Similarly, define in the constraint domain: $\postco(\Gamma) \defeq \bigcup_{t \in \Delta} \fire{\Gamma}$; $\postco^0(\Gamma) \defeq \Gamma$ and $\postco^{i+1}(\Gamma) \defeq \postco(\postco^{i}(\Gamma))$ for every $i \geq 0$.
\end{itemize}
\end{definition}
The $a$-subscript stands for ``accelerated.'' Observe that we cannot define 
$\postco^*(\Gamma)$ directly as the infinite union $\bigcup_{i \geq 0} \postco^i(\Gamma)$ because constraints 
are only closed under finite unions.

\begin{theorem}
Let $\PP$ be an IO protocol and let $S$ be a counting set. Then both $\post^*(S)$ and $\pre^*(S)$ are counting sets.
\end{theorem}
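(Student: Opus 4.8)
The plan is to reduce $\post^*$ to the \emph{accelerated} operator $\postco$ of Definition~\ref{def:postandco} and to show that the associated ascending chain of counting sets stabilises after finitely many steps. Fix a CoNF-constraint $\Gamma$ with $\sem{\Gamma}=S$. We may assume $\Delta\neq\emptyset$ (otherwise $\post^*(S)=S$ and there is nothing to prove), and, invoking Section~\ref{sub:a_normal_form_for_immediate_observation_protocols}, that $\PP$ is in normal form, so that every transition has the shape $t=(q_s,q_o)\mapsto(q_d,q_o)$ with $q_s\neq q_o$ required by Definition~\ref{def:fire}. Since $\post[t]\subseteq\post^*[t]$ we get $\post\subseteq\postco$ and hence $\post^*\subseteq\postco^*$; conversely $\post^*[t](S')\subseteq\post^*(S')$ for every $S'$, and $\postco$ is monotone, so an easy induction gives $\postco^i(S)\subseteq\post^*(S)$ for all $i$, whence $\postco^*\subseteq\post^*$. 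Thus $\post^*(S)=\postco^*(S)=\bigcup_{i\geq0}\postco^i(S)$.

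Next I would show that the corresponding chain of constraints keeps bounded $U$-norm. By Lemma~\ref{lem:postsize}, $\fire{\Gamma'}$ is a CoNF-constraint with $\sem{\fire{\Gamma'}}=\post^*[t](\sem{\Gamma'})$ and $\unorm{\fire{\Gamma'}}\leq\unorm{\Gamma'}$, and by Proposition~\ref{prop:oponconf} a finite union of CoNF-constraints is a CoNF-constraint whose $U$-norm is the maximum of the summands'. Hence $\postco(\Gamma')=\bigcup_{t\in\Delta}\fire{\Gamma'}$ is a CoNF-constraint with $\unorm{\postco(\Gamma')}\leq\unorm{\Gamma'}$ and $\sem{\postco(\Gamma')}=\postco(\sem{\Gamma'})$. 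By induction every $\Gamma_i\defeq\postco^i(\Gamma)$ is a CoNF-constraint with $\sem{\Gamma_i}=\postco^i(S)$ and $\unorm{\Gamma_i}\leq u$, where $u\defeq\unorm{\Gamma}$.

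Now comes the stabilisation step. Since $\postco$ is monotone and satisfies $S'\subseteq\postco(S')$, the sets form an ascending chain $\sem{\Gamma_0}\subseteq\sem{\Gamma_1}\subseteq\cdots$, and all $\Gamma_i$ have $U$-norm at most $u$. By Theorem~\ref{thm:wqo} the sequence $\Gamma_0,\Gamma_1,\dots$ cannot be bad for $\sqsubseteq$, so there are $i<j$ with $\Gamma_i\sqsubseteq\Gamma_j$, hence $\sem{\Gamma_i}\supseteq\sem{\Gamma_j}$; together with the ascending chain this forces $\sem{\Gamma_i}=\sem{\Gamma_{i+1}}=\cdots=\sem{\Gamma_j}$, and since $\sem{\Gamma_{i+1}}=\postco(\sem{\Gamma_i})$ the set $\sem{\Gamma_i}$ is a fixpoint of $\postco$; an easy induction gives $\sem{\Gamma_k}=\sem{\Gamma_i}$ for all $k\geq i$. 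Therefore $\post^*(S)=\postco^*(S)=\sem{\Gamma_i}$ is a counting set (and $\Gamma_i$ is an effectively computable CoNF-constraint for it). For $\pre^*$, reversing every transition $(q_1,q_2)\mapsto(q_1,q_3)$ to $(q_1,q_3)\mapsto(q_1,q_2)$ yields again an IO protocol $\PP^{-1}$, and $\pre^*_{\PP}(S)=\post^*_{\PP^{-1}}(S)$, so this case follows from the previous one.

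The main obstacle is precisely this termination argument. The $L$-norms of the $\Gamma_i$ genuinely grow along the chain (cf.\ Table~\ref{tab:example}), so the chain cannot be shown to stabilise by exhibiting a numeric potential; stabilisation must come from a well-quasi-order that bounds only the $U$-norm, which is exactly what Theorem~\ref{thm:wqo} provides. Making that theorem applicable rests on the invariant $\unorm{\postco^i(\Gamma)}\leq\unorm{\Gamma}$, and hence ultimately on the careful design of $\fire{}$ underlying Lemma~\ref{lem:postsize}; this is the technical heart of the proof.
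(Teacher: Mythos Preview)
Your proof is correct and follows essentially the same route as the paper: reduce $\post^*$ to $\postco^*$, use Lemma~\ref{lem:postsize} to keep the $U$-norm bounded along the iteration, invoke the well-quasi-order of Theorem~\ref{thm:wqo} to force stabilisation of the ascending chain, and handle $\pre^*$ by transition reversal. The only cosmetic difference is your explicit appeal to the normal form of Section~\ref{sub:a_normal_form_for_immediate_observation_protocols}; note that this transformation adds states (so $S$ is not literally a set of configurations of the normalised protocol), but this is easily repaired and the paper is equally informal on this point.
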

\begin{proof}
We first prove that $\post^*(S)$ is a counting set.
It follows from Definition~\ref{def:postandco} that $\post^{i}(S) \subseteq \postco^{i}(S)$ but $\postco^{i}(S) \subseteq \post^*(S)$ for every $i \geq 0$, hence $\postco^*(S) = \post^*(S)$, and so it suffices to prove that $\postco^*(S)$ is a counting set.

Let $\Gamma$ be a CoNF-constraint such that $\sem{\Gamma}=S$. 
By Lemma \ref{lem:postsize}, $\postco^{i}(\Gamma)$ is a counting set and $\unorm{\postco^{i}(\Gamma)} \leq \unorm{\Gamma}$ for every $i \geq 0$. 
By Theorem \ref{thm:wqo}, there exist indices $i < j$ such that $\postco^{j}(\Gamma) \subseteq \postco^{i}(\Gamma)$, hence \(\postco^{j}(\Gamma) = \postco^i(\Gamma)\) since \(\Gamma'\subseteq\postco(\Gamma')\) for all \(\Gamma'\), and finally $\postco^*(\Gamma) = \bigcup_{k=1}^j \postco^{k}(\Gamma)$. 
Since counting sets are closed under finite union,  $\postco^*(S)$ is a counting set.

Finally we show that $\pre^*(S)$ is also a counting set. 
Consider the protocol $\PP_r$ obtained by ``reversing'' the transitions of $\PP$, i.e., $\PP_r$ has a transition $(q_1, q_2) \mapsto (q_3, q_4)$ 
if{}f $\PP$ has a transition $(q_3, q_4) \mapsto (q_1, q_2)$. Then $\pre^*(S)$ in $\PP$ is equal to $\post^*(S)$ in $\PP_r$.
\end{proof}

\subsection{Bounding the Size of \texorpdfstring{$\post^*(\Gamma)$}{post*(L,U)}}

Given a CoNF-constraint $\Gamma$, we obtain an upper bound on the size of a CoNF-constraint denoting $\post^*(\Gamma)$ and $\pre^*(\Gamma)$. More precisely, we obtain bounds on the $L$-norm and $U$-norm of a constraint for $\post^*(\Gamma)$ as a function of the same parameters for $\Gamma$.

We first recall a theorem of Rackoff \cite{Rackoff78} recast in the terminology of population protocols.

\begin{theorem}[\cite{Rackoff78,BG11}]
\label{thm:rackoff}
Let $\PP$ be a population protocol with set of states $Q$ and let $C$ be a configuration of $\PP$. 
For every configuration $C'$, if there exists $C''$ such that $C' \trans{*} C'' \geq_{\times} C$, then there exists 
$\sigma$ and $C'''$ such that $C' \trans{\sigma} C''' \geq_{\times} C$ and $|\sigma| \leq (3 + C(Q)) ^{(3|Q|)!+1} \in C(Q)^{2^{{\cal O}(|Q| \log |Q |)}}$. 
(Recall that $C(Q) \defeq \sum_{q\in Q} C(q)$ and $C(Q) \geq 2$.)
\end{theorem}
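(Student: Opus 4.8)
The statement is Rackoff's doubly-exponential bound for the coverability problem, specialised to population protocols. Population protocols form a subclass of vector addition systems: a transition $(q_1,q_2)\mapsto(q_1',q_2')$ acts on a configuration by adding the displacement vector $\vec q_1'+\vec q_2'-\vec q_1-\vec q_2$, whose entries all lie in $\{-2,-1,0,1,2\}$, and the dimension is $d \defeq |Q|$; the target $C$ has sup-norm at most $C(Q)$. So one could simply cite \cite{Rackoff78}, but the plan I would carry out is to reprove Rackoff's bound with the transition norm fixed to $2$ and then instantiate the constants to get exactly $(3+C(Q))^{(3|Q|)!+1}$.

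Fix $C$ and an order on $Q$, so "the first $i$ coordinates" makes sense. For $0\le i\le d$, call a sequence of transitions an \emph{$i$-run} from a vector $x\in\Z^d$ if, after every prefix, coordinates $1,\dots,i$ stay $\ge 0$ (coordinates $i{+}1,\dots,d$ may go negative); the $i$-run \emph{$i$-covers} $C$ if its final vector dominates $C$ on coordinates $1,\dots,i$ only. Let $f(i)$ be the supremum, over all $x$ from which some $i$-run $i$-covering $C$ starts, of the length of a \emph{shortest} such run (finiteness follows from the recurrence below). Two observations frame the proof: $f(0)=0$, since the empty run $0$-covers $C$ trivially; and a $d$-run that $d$-covers $C$ from $C'$ is precisely a witness of the form required in the theorem, so the theorem reduces to an upper bound on $f(d)$.

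The core is the recurrence $f(i{+}1)\le T^{\,i+1}+f(i)+1$ with $T\defeq\max(C)+2f(i)+1$. Take a shortest $(i{+}1)$-run $\rho$ from $x$ that $(i{+}1)$-covers $C$. \textbf{Case 1:} all of coordinates $1,\dots,i{+}1$ stay strictly below $T$ along $\rho$. They then take values in $[0,T)^{i+1}$, so if $|\rho|\ge T^{i+1}$ two prefixes of $\rho$ end in vectors agreeing on coordinates $1,\dots,i{+}1$; excising the segment between them gives an $(i{+}1)$-run that is still $(i{+}1)$-bounded and still $(i{+}1)$-covers $C$, because that segment has zero net displacement on coordinates $1,\dots,i{+}1$ and these are the only coordinates the two notions constrain — contradicting minimality, so $|\rho|\le T^{i+1}$. \textbf{Case 2:} let $\rho_1$ be the shortest prefix of $\rho$ after which some coordinate $j\le i{+}1$ has value $\ge T$; applying the Case 1 argument to $\rho_1$ with its last step removed bounds $|\rho_1|\le T^{i+1}+1$. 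From the end of $\rho_1$ on, the nonnegativity of coordinate $j$ is redundant, so the suffix of $\rho$ is an $i$-run (on the $i$ coordinates $\{1,\dots,i{+}1\}\setminus\{j\}$) that $i$-covers $C$; by induction it can be replaced by one of length $\le f(i)$. Along that replacement coordinate $j$ drops by at most $2f(i)<T$ in total, so it stays $\ge T-2f(i)=\max(C)+1>0$ throughout and ends $\ge C_j$; concatenating with $\rho_1$ yields a run of length $\le T^{i+1}+1+f(i)$ that $(i{+}1)$-covers $C$.

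Unrolling the recurrence from $f(0)=0$ with transition norm $2$ and $\max(C)\le C(Q)$, the nested exponentiation over $i=1,\dots,d$ produces a factorial-in-$d$ exponent; the multiplicative constants and the base are absorbed by replacing $d!$ with $(3d)!$ and $C(Q)$ with $3+C(Q)$ (using $C(Q)\ge 2$), giving $f(d)\le(3+C(Q))^{(3|Q|)!+1}$. The main obstacle I anticipate is pure bookkeeping: pinning the definition of "$i$-covers" so that loop excision in Cases 1 and 2 is sound, and then pushing the constants through the recurrence so the exponent comes out literally as $(3|Q|)!+1$ rather than merely $2^{O(|Q|\log|Q|)}$. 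Conservation of the total population in population protocols is a real restriction of the model, but it neither helps nor hinders the argument and can be ignored.
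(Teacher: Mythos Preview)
The paper does not prove this theorem; it is stated as a known result imported from \cite{Rackoff78,BG11} and used as a black box. Your opening observation that ``one could simply cite \cite{Rackoff78}'' is exactly what the paper does.

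That said, your sketch is the standard Rackoff induction and is essentially correct. One small point to tighten: you define $f(i)$ relative to ``the first $i$ coordinates'', but in Case~2 you need to invoke the inductive hypothesis on the set $\{1,\dots,i{+}1\}\setminus\{j\}$, which is not in general an initial segment. This is harmless---either redefine $f(i)$ as the maximum over all $i$-element subsets of $Q$, or appeal to permutation symmetry---but it should be made explicit. The excision argument for bounding $|\rho_1|$ is fine: the key is that excising a loop with zero displacement on coordinates $1,\dots,i{+}1$ preserves both the $(i{+}1)$-run property and $(i{+}1)$-coverage of the \emph{entire} $\rho$, contradicting minimality of $\rho$ (not of $\rho_1$). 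Your final remark is apt: the precise constant $(3|Q|)!+1$ is inherited from the Bozzelli--Ganty formulation rather than from Rackoff's original paper, and recovering it exactly from the recurrence is pure bookkeeping.
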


Observe that the bound on the length of $\sigma$ depends only on $C$ and $\PP$, but not on $C'$.
Using this theorem we can already obtain an upper bounds for $\pre^*(\Gamma)$ when $\sem{\Gamma}$ is upward-closed. 
The bound is valid for arbitrary population protocols.

Recall that if $\sem{\Gamma}$ is upward-closed we can assume $\unorm{\Gamma}=0$ (see Remark \ref{rem:specialcases}).

\begin{proposition}
\label{prop:bounds-upward-closed}
Let $\PP$ be population protocol with $n$ states. Let $S$ be an upward-closed set  of configurations and let $\Gamma$ be a CoNF-constraint
with $\unorm{\Gamma}=0$ such that $\sem{\Gamma}=S$. 
There exists a CoNF constraint $\Gamma'$ such that $\sem{\Gamma'}=\pre^*(\Gamma)$ and $\unorm{\Gamma'} = 0$, $\lnorm{\Gamma'} \in (\lnorm{\Gamma})^{2^{{\cal O}(n \log n)}}$.
\end{proposition}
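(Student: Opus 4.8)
The plan is to reduce the claim to a size bound on the minimal elements of $\pre^*(S)$, which will come from Rackoff's theorem. First I would observe that $\pre^*(S)$ is upward-closed: if $C' \trans{*} C'' \in S$ and $D' \geq_{\times} C'$, then firing the same sequence of transitions from $D'$ — letting the $D'-C'$ surplus agents sit idle — reaches a configuration $D'' \geq_{\times} C''$, and $D'' \in S$ since $S$ is upward-closed. Hence, by Remark~\ref{rem:specialcases}, $\pre^*(S) = \sem{\Gamma'}$ for a CoNF-constraint $\Gamma' = \{(L_1, U), \ldots, (L_k, U)\}$ with $U(x_j) = \infty$ for all $j$, where $L_1, \ldots, L_k$ are (the characteristic vectors of) the minimal elements of $\pre^*(S)$; in particular $\unorm{\Gamma'} = 0$ and $\lnorm{\Gamma'}$ equals the largest agent count $C'(Q)$ over all minimal elements $C'$ of $\pre^*(S)$. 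It remains to bound this quantity.

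So let $C'$ be a minimal element of $\pre^*(S)$. Since $\sem{\Gamma}=S$ is upward-closed, every minimal element $D$ of $S$ satisfies $D(Q) \leq \lnorm{\Gamma}$ (indeed $D$ coincides with the lower-bound vector of some non-empty minterm of $\Gamma$), and $C' \trans{*} C'' \geq_{\times} D$ holds for one such $D$. Applying Theorem~\ref{thm:rackoff} with target configuration $D$ yields a run $C' \trans{\sigma} C''' \geq_{\times} D$ with $|\sigma| \leq (3 + D(Q))^{(3n)!+1} \leq (3 + \lnorm{\Gamma})^{(3n)!+1} =: N$, and $N \in (\lnorm{\Gamma})^{2^{{\cal O}(n \log n)}}$.

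The key step is to show $C'(Q) \leq 2N + \lnorm{\Gamma}$ using minimality of $C'$. I would split $C'$ into an \emph{active} part $A$, collecting the agents that participate in at least one transition of $\sigma$, and the rest $P = C' - A$. Since each transition involves exactly two agents, $A(Q) \leq 2|\sigma| \leq 2N$; and since the agents in $P$ never move, deleting them leaves a valid run $A \trans{\sigma} A'''$ with $C''' = A''' + P$. Let $R$ be defined by $R(q) = \max\{0, D(q) - A'''(q)\}$, the shortfall of $A'''$ towards covering $D$. From $A''' + P \geq_{\times} D$ we get $P \geq_{\times} R$, so $A + R \leq_{\times} A + P = C'$, while $A + R \trans{\sigma} A''' + R \geq_{\times} D$ shows $A + R \in \pre^*(S)$. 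Minimality of $C'$ forces $C' = A + R$, hence $C'(Q) = A(Q) + R(Q) \leq 2N + D(Q) \leq 2N + \lnorm{\Gamma}$. Combining with the first paragraph, $\lnorm{\Gamma'} \leq 2N + \lnorm{\Gamma} \in (\lnorm{\Gamma})^{2^{{\cal O}(n \log n)}}$, as claimed.

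I expect the active/passive decomposition to be the only delicate point: one must check that restricting $\sigma$ to the active agents again yields a firable sequence — which holds because the two agents consumed by each step are active by definition, so removing the passive agents disables no step — and that the shortfall $R$ is handled monotonically. The remainder is routine: the upward-closure of $\pre^*(S)$, the $\unorm{\Gamma'}=0$ representation from Remark~\ref{rem:specialcases}, and the arithmetic $(3+\lnorm{\Gamma})^{(3n)!+1} \in (\lnorm{\Gamma})^{2^{{\cal O}(n \log n)}}$.
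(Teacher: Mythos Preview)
Your argument is correct, and it takes a genuinely different route from the paper. The paper's proof is \emph{iterative}: it uses Rackoff's bound $K \defeq (3+\lnorm{\Gamma})^{(3n)!+1}$ to argue that $\pre^*(S)=\preco^K(S)$, sets $\Gamma'=\preco^K(\Gamma)$, and then bounds $\lnorm{\Gamma'}$ by observing (via case~4 of Definition~\ref{def:fire}, which applies because every upper bound is $\infty$) that one application of $\fire{}$ increases the $L$-norm by at most~$1$, so $\lnorm{\preco^K(\Gamma)}\leq\lnorm{\Gamma}+K$. Your proof is \emph{direct}: you bound the size $C'(Q)$ of any minimal element $C'$ of $\pre^*(S)$ by applying Rackoff to obtain a short covering run and then using the active/passive agent split to peel off the agents that are not needed, appealing to minimality to conclude $C'(Q)\leq 2N+\lnorm{\Gamma}$.

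Both give the same asymptotic bound. Your approach has the advantage that it works verbatim for arbitrary population protocols (as the proposition is stated), since it only uses monotonicity and Rackoff; the paper's proof, as written, leans on the IO-specific operator $\fire{}$ and Lemma~\ref{lem:postsize}. Conversely, the paper's iterative scheme dovetails with the $\preco$/$\postco$ machinery that drives the subsequent, harder results (Theorem~\ref{thm:closureprepost}), so it is the more reusable template within the paper. Your active/passive decomposition is sound as you anticipated: each step of $\sigma$ consumes exactly the two agents you declared active, so dropping the passive agents disables no step, and the shortfall argument with $R(q)=\max\{0,D(q)-A'''(q)\}$ is the standard way to extract a minimal witness.
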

\begin{proof}
It is well known that if $S$ is upward-closed, then so is $\pre^*(S)$. (This follows 
from Lemma \ref{lem:postsize}, but is also an easy consequence of the fact that $C \trans{*} C'$ implies $C+C'' \trans{*} C'+C''$ for every $C''$).
Let $K \defeq (3 + \lnorm{\Gamma}) ^{(3n)!+1}$. 
By Theorem \ref{thm:rackoff}, for every configuration $C$, if $C \in \pre^*(S)$ then $C \in \bigcup_{i=0}^K \pre^i(S)$, and so $\pre^*(S)=\bigcup_{i=0}^K\pre^i(S)=\preco^K(S)$. Let $\Gamma'= \preco^K(\Gamma)$. Then $\sem{\Gamma' } = \pre^*(S)$. 
Further, we have $\unorm{\Gamma'} = 0$ by Lemma \ref{lem:postsize} (the Lemma proves the result for $\post^*$, but exactly the same proof works for $\pre^*$ by reversal of transitions). 
To prove the bound for the $L$-norm, observe that by the definition of $\fire{(L,U)}$ we have  $\lnorm{\fire{(L, U)}} \leq \lnorm{(L,U)}+1$, as we are always in case 4. of Definition \ref{def:fire} (because $S$ is upward-closed). 
Since $\preco(\Gamma) = \bigcup_{t \in \Delta_r} \fire{\Gamma}$ and the $L$-norm of a union is the maximum of the $L$-norms, we get $\lnorm{\preco(\Gamma)} \leq \lnorm{\Gamma} + 1$. 
By induction, $\lnorm{\preco^K(\Gamma)} \leq \lnorm{\Gamma} + K$, and the result follows. 
\end{proof}

In the rest of the section we obtain a bound valid not only for upward-closed sets, but for arbitrary counting sets. The price to pay is a restriction to 
IO protocols. We start with some miscellaneous notations that will be useful.

\begin{itemize}
\item Given a mapping $f \colon X \ra \N$ and $Y \subseteq X$ we write $f(Y)$ for
$\sum_{x \in Y} f(x)$, and $f|_Y$ for the projection of $f$ onto $Y$. 
\item Given a transition sequence $\sigma$, we denote by $c(\sigma)$ the ``compression'' of $\sigma$ as the shortest regular expression $r=t_1^* \ldots t_m^*$ such that $\sigma \in L(r)$, and denote $|c(\sigma)|=m$.
	While \(\sigma\) induces a sequence of \(\pre[t]\) or \(\post[t]\), \(c(\sigma)\) induces a sequence of \(\pre^*[t]\) or \(\post^*[t]\).
\end{itemize}

For the rest of the section we fix an IO protocol $\PP$ with a set of states $Q$ and $|Q|=n$.
We say that $C$ \emph{covers} $C'$ if $C \geq_\times C'$.
We introduce a relativization. 

\begin{definition}
Let $E \subseteq Q$. A configuration $C$ \emph{$E$-covers}  $C'$, denoted $C \ecov{E} C'$, if
$C(q) = C'(q)$ for every $q \in E$ and $C(q) \geq C'(q)$ for every $q \in Q \setminus E$.
$\PP$ is \emph{$E$-increasing} if  for every transition $(q_s, q_o) \mapsto (q_d, q_o)$ either $q_s \notin E$ or $q_d \in E$.
\end{definition}
Observe that $\PP$ is vacuously $\emptyset$-increasing and $Q$-increasing.  
Intuitively, if $\PP$ is $E$-increasing then the total number of agents in the states of $E$ cannot decrease. 
Indeed, for that we would need a transition that removes agents from $E$ without replacing them, i.e., 
a transition such that $q_s \in E$ and $q_d \notin E$. So, by induction, we have:

\begin{lemma}
\label{lem:noinc}
If $\PP$ is $E$-increasing and $C' \trans{*} C$ then $C'(E) \leq C(E)$.
\end{lemma}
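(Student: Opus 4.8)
The plan is a straightforward induction on the length of the transition sequence witnessing $C' \trans{*} C$. For the base case the sequence is empty, $C' = C$, and the inequality holds with equality. For the inductive step, write $C' \trans{*} C'' \trans{t} C$ for a single IO transition $t = (q_s, q_o) \mapsto (q_d, q_o)$; by the induction hypothesis $C'(E) \leq C''(E)$, so it suffices to establish $C''(E) \leq C(E)$, i.e.\ that a single transition cannot decrease the number of agents in $E$.

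First I would observe that firing $t$ at $C''$ has a very restricted effect on the configuration: the observed agent stays in $q_o$, and exactly one agent moves from $q_s$ to $q_d$. Hence $C(q_s) = C''(q_s) - 1$ and $C(q_d) = C''(q_d) + 1$ when $q_s \neq q_d$, and $C = C''$ when $q_s = q_d$; in all cases $C(q) = C''(q)$ for every $q \notin \{q_s, q_d\}$. The degenerate possibilities $q_s = q_o$ or $q_d = q_o$ do not affect this, since the observing agent's state is unchanged by the interaction.

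Then I would do a case analysis on whether $q_s \in E$. If $q_s \notin E$, the only state of $E$ whose count can change is $q_d$, and its count does not decrease, so $C(E) \geq C''(E)$. If $q_s \in E$, then the $E$-increasing hypothesis forces $q_d \in E$ as well; the loss of one agent in $q_s$ is exactly compensated by the gain of one agent in $q_d$ (or there is no change at all if $q_s = q_d$), so $C(E) = C''(E)$. In either case $C''(E) \leq C(E)$, which together with the induction hypothesis yields $C'(E) \leq C(E)$.

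There is no serious obstacle here; the only point requiring care is to cover the degenerate forms of IO transitions ($q_s = q_o$, $q_d = q_o$, $q_s = q_d$) so that the description ``one agent leaves $q_s$, one agent enters $q_d$'' is justified in every case, and hence that the total occupancy of $E$ is monotone along every interaction.
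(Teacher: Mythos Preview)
Your proposal is correct and follows exactly the approach the paper intends: the paper states the intuition (a transition can only decrease $C(E)$ if $q_s \in E$ and $q_d \notin E$, which $E$-increasing forbids) and then simply writes ``So, by induction, we have:'' before the lemma. Your write-up is a faithful expansion of that one-line argument, with the added care of checking the degenerate IO transition shapes.
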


Now we give a result bounding the length of $E$-covering sequences for $E$-increasing protocols.

\begin{lemma}
\label{lem:rackoff}
Let $\PP = (Q,\Delta)$ be an IO protocol scheme, let $C$ be a configuration of $\PP$, and let $E \subseteq Q$ such that $\PP$ is $E$-increasing. 
For every configuration $C'$, if there exists $C''$ such that $C' \trans{*} C'' \ecov{E} C$, then 
there exists $\sigma$ and $C'''$ such that $C' \trans{\sigma} C''' \ecov{E} C$ and  $|\sigma| \in C(Q)^{2^{{\cal O}(n \log n)}}$, where the constant in the Landau symbol is independent of $\PP$ and $C$.
\end{lemma}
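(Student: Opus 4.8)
The plan is to adapt Rackoff's induction to the relativized covering relation $\ecov{E}$, exploiting $E$-increasingness. Recall that in Theorem \ref{thm:rackoff} the bound on the witnessing sequence $\sigma$ depends only on the target $C$ (through $C(Q)$) and on $|Q|$, not on the source $C'$. I want the same phenomenon here. The key observation is that, because $\PP$ is $E$-increasing, Lemma \ref{lem:noinc} tells us that along any execution $C' \trans{*} C''$ the quantity $C(E)$ can only grow; combined with the requirement $C''(q) = C(q)$ for $q \in E$, this means that every configuration on a successful path has at most $C(E)$ agents in the $E$-states, and exactly $C(E)$ once it reaches a configuration that $E$-covers $C$. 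So the ``$E$-part'' of configurations ranges over a set of bounded size, while the ``$(Q\setminus E)$-part'' only needs to reach a threshold, exactly as in ordinary coverability.

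First I would set up Rackoff's ranking argument. For $0 \le i \le n$, say a configuration $D$ is \emph{$i$-coverable} (towards $C$) if there is a path from $D$ that reaches some $C'''$ with $C''' \ecov{E} C$ using only the first $i$ coordinates as ``bounded'' and treating the rest as $\N$ (i.e. work over $\N^i \times \Z^{n-i}$ in the usual Rackoff relaxation, but always keeping the $E$-coordinates constrained to equal those of $C$ at the end and bounded by $C(E)$ throughout by Lemma \ref{lem:noinc}). Let $\ell(i)$ be the maximum over all $i$-coverable $D$ of the length of a \emph{shortest} covering path. The base case $i = 0$: here all non-$E$ coordinates are unconstrained integers, so if a covering path exists at all, one of length at most the number of distinct ``small'' configurations suffices — and here ``small'' is governed by $C(E) \le C(Q)$ and by the values $C(q)$ for $q \notin E$, giving a bound of the form $C(Q)^{O(n)}$. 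The inductive step is the standard Rackoff case split: either the shortest path never makes the $i$-th coordinate exceed a bound $B = \ell(i-1) + C(Q)$ (so it stays inside a finite state space of size $\le B^n \le (\ell(i-1)+C(Q))^n$, bounding $\ell(i)$ by that), or it does exceed $B$, at which point we truncate at the first such moment, apply the induction hypothesis to the suffix in the relaxed system with one more free coordinate, and splice; the truncation is sound because once coordinate $i$ is large enough it can be treated as unbounded. Solving the recurrence $\ell(i) \le (\ell(i-1) + C(Q))^{O(n)}$ with $\ell(0) \le C(Q)^{O(n)}$ gives $\ell(n) \le C(Q)^{2^{O(n \log n)}}$, and since the original system is the case $i = n$, this is the desired bound on $|\sigma|$.

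The one genuinely new point, relative to the textbook proof, is checking that the relaxation is \emph{compatible with} the constraint ``$D(q) = C(q)$ for $q \in E$'': I must make sure that when I drop a coordinate $q$ into $\Z$ (allowing it to go negative in intermediate configurations), I never do so for a $q \in E$ before the path has already been forced to keep $D(E)$ pinned. This is exactly where $E$-increasingness earns its keep: by Lemma \ref{lem:noinc}, the $E$-coordinates are monotone nondecreasing and bounded above by $C(E)$ on every successful path, so they behave like counters in a \emph{bounded} range throughout — I can simply fold them into the ``finite control'' of the system (a factor $(C(E)+1)^{|E|} \le C(Q)^n$ in the state count) and run Rackoff's argument purely on the $Q \setminus E$ coordinates with the ordinary $\ge_\times$ target $C|_{Q \setminus E}$. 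With that reduction in hand the rest is Rackoff verbatim, and the bound $C(Q)^{2^{O(n\log n)}}$ with a constant independent of $\PP$ and $C$ follows just as in Theorem \ref{thm:rackoff}. I expect the bookkeeping of this reduction — stating precisely the relaxed transition system and verifying that shortening/splicing preserves the $E$-covering target — to be the main technical obstacle, though conceptually it is routine once the monotonicity of the $E$-coordinates is isolated.
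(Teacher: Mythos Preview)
Your proposal is correct and follows essentially the same approach as the paper: both exploit Lemma~\ref{lem:noinc} to observe that along any successful path the $E$-coordinates are bounded by $C(E)$, fold the (finitely many) possible $E$-profiles into finite control, and then apply a Rackoff-type bound to the remaining $|Q\setminus E|$ coordinates with the ordinary coverability target $C|_{Q\setminus E}$. The only cosmetic difference is that the paper packages the reduction as a VASS with control points $\{M\colon E\to\N \mid M(E)\le C(E)\}$ and invokes the Bozzelli--Ganty generalization of Rackoff as a black box, whereas you propose to re-run Rackoff's induction by hand; the resulting bound and the underlying idea are the same.
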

\begin{proof}
We use a theorem of Bozzelli and Ganty \cite{BG11} that generalizes Rackoff's theorem to Vector Addition Systems with States (VASS).
Recall that a $d$-VASS is a pair $(P, \Delta)$ where $P$ is a set of control points and $\Delta \subseteq P \times \mathbb{Z}^d \times P$ is a finite
set of transitions. The number $d$ is called the dimension. A configuration of a $d$-VASS is a pair $(p, v)$, where $p \in P$ and 
$v \in \N^d$. Intuitively, the VASS acts on $d$ counters that can only take non-negative values. Formally, we have $(p, v) \ra (p',v')$ if 
there is a transition $(p, v'', p')$ such that $v + v'' = v'$, i.e., the machine moves from $p$ to $p'$ by updating the counters with $v''$.
Given two configurations $(p, v)$ and $(p',v')$, we write $(p, v) \geq_\times (p', v')$
if $p=p'$ and $v \geq_\times v'$. It is shown \cite{BG11} in Theorem 1  that given a $d$-VASS $(P, \Delta)$ and a configuration $C$, for each
configuration $C'$, if  there exists $C''$ such that $C' \trans{*} C'' \geq_{\times} C$, then there exists 
$\sigma$ and $C'''$ such that $C' \trans{\sigma} C''' \geq_{\times} C$ and $|\sigma| \leq |P| \cdot (\|\Delta\|_1 + \|C\|_1+2)^{(3d)! +1}$, where
$\|\Delta\|_1$ and $\|C\|_1$ denote the maximal components of $\Delta$ and $C$, respectively. 

Let $n = |Q|$. 
We construct a VASS $V_{\PP,E}$ that simulates the protocol $\PP$, and then apply Bozzelli and Ganty's theorem. 
We do not give all the formal details of the construction.  Intuitively, given a configuration $C$ of $\PP$, we split it into $(C|_E, C|_{Q \setminus E})$. Since $\PP$ is $E$-increasing, every configuration
$(C'|_E, C'|_{Q \setminus E})$ from which we can reach $(C|_E, C|_{Q \setminus E})$ satisfies $C'|_E(E) \leq C|_E(E)$ (Lemma \ref{lem:noinc}), and so there are only finitely many 
(at most $(C(E)+1)^n$)
possibilities for $C'|_E$. 
The control points of the VASS $V_{\PP,E}$ correspond to these finitely many possibilities. 
Formally, the set of control points of $V_{\PP,E}$ is the set of all mappings $M \colon E \ra \N$ such that $M(E) \leq C(E)$, plus some auxiliary control points (see below). The dimension, or number of counters, is $|Q \setminus E|$. The transitions of $V_{\PP,E}$ simulate the transitions of $\PP$. 
For example, assume $t = (q_o, q_s) \mapsto (q_o, q_d)$ is a transition of $\PP$ such that $q_s, q_o \notin E$ and $q_d \in E$. 
Then for every control point $M$ of $V_{\PP,E}$ the VASS has a transition $t_1$ leading from $M$ to an auxiliary control point $\langle M, t \rangle$, and a transition $t_2$ leading from $\langle M, t \rangle$ to the control point $M'$ given by $M'(q_d)=M(q_d)+1$ and $M'(q)=M(q)$ for every other $q \in E$. 
Transition $t_1$ decrements the counter of $q_s$ and $q_o$ by $1$, leaving all other counters untouched, and transition $t_2$ increments the counters $q_o$, leaving all other counters untouched. 

It follows that there is an execution $C' \trans{*} C'' \ecov{E} C$ in $\PP$ if{}f there is an execution $(C'|_E, C'|_{Q \setminus E}) \trans{*} (C''|_E, C''|_{Q \setminus E}) \geq_{\times} (C|_E, C|_{Q \setminus E})$ in $V_{\PP,E}$ of at most twice the length.

Applying Bozzelli and Ganty's theorem, we obtain that the length of $\sigma$ is bounded by $|P| \cdot (\|\hat{\Delta}\|_1 + \|C\|_1+2)^{(3d)! +1}$,
where $|P|$, $ \hat{\Delta}$, and $d$ are now the set of control points, transitions, and dimension of $V_{\PP,E}$.  
We have $|P| \leq (C(E)+1)^n + |\Delta| (C(E)+1)^n$, $d = |Q \setminus E| \leq n$, $\|\hat{\Delta}\|_1=2$. 
Further, we have $\|C\|_1 \leq C(Q\setminus E)$, which leads to a bound of $(1+ |\Delta|)(C(E)+1)^n \cdot (C(Q \setminus E)+4)^{(3n)! +1} \in C(Q)^{2^{O(n \log n)}}$.
\end{proof}

Next we prove a double exponential bound on the length of $E$-covering sequences. 
The result is similar to Lemma~\ref{lem:rackoff} with two important changes:  the restriction to $E$-increasing protocols is dropped, and we consider the bound on the length of \(c(\sigma)\) instead of \(\sigma\).

\begin{theorem}
	\label{thm:rackoff2}
Let $\PP$ be an IO protocol with a set \(Q\) of \(n\) states, and let $C$ be a configuration of $\PP$. For every $E \subseteq Q$ and for every configuration $C_0$, if there exists $\tau$ and $C'$ such that $C_0\trans{\tau}~C'\ecov{E}C$, then  there exists $\sigma$ and $C''$ such that $C_0 \trans{\sigma} C'' \ecov{E} C$ and  $|c(\sigma)| \in C(Q)^{2^{\mathcal{O}(n^2 \log n)}}$, where the constant in the Landau symbol is independent of $\PP$, $C$, and $C_0$.
\end{theorem}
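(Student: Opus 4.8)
The plan is to generalize the proof of Lemma~\ref{lem:rackoff} by a Rackoff/Bozzelli--Ganty-style thresholding argument, carried out by induction on $k := |Q\setminus E|$.

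\textbf{Base case} $k=0$: then $E=Q$, so $\PP$ is vacuously $E$-increasing and $C''\ecov{Q}C$ just means $C''=C$; Lemma~\ref{lem:rackoff} yields a witness $\sigma$ with $|\sigma|\in C(Q)^{2^{\mathcal O(n\log n)}}$, and since $|c(\sigma)|\le|\sigma|$ we are within the claimed bound.

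\textbf{Inductive step.} Fix a threshold $B$ of order $C(Q)^{2^{\mathcal O(n\log n)}}$, chosen large enough that Lemma~\ref{lem:rackoff}'s bound for the relevant target lies below $B$ while $B^{n}$ still fits inside $C(Q)^{2^{\mathcal O(n^2\log n)}}$ (there is room: one recurses at most $n$ times, losing only a factor $n$ in the exponent per level, and $n$ nested applications of ``$\bullet\mapsto\bullet^{2^{cn\log n}}$'' compose to ``$\bullet\mapsto\bullet^{2^{cn^2\log n}}$'' with the \emph{same} constant $c$). Given a witnessing run $C_0\trans{\tau}C'\ecov{E}C$: if $\PP$ is $E$-increasing we invoke Lemma~\ref{lem:rackoff} directly. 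Otherwise we locate along $\tau$ the first configuration $D_j$ at which some state becomes ``large''; the delicate point is to arrange that this state $q^\ast$ lies in $Q\setminus E$ and, after recutting $\tau$, stays $\ge B$ for the whole remaining suffix (coordinates change by $\pm1$, so by an intermediate-value argument the value of $q^\ast$ cannot drop back below $B$ without revisiting the boundary). Split $\tau=\tau_1\tau_2$ at $D_j$. Along $\tau_1$ the relevant coordinates stay below $B$, so a cycle-removal argument shortens $\tau_1$ to length $\le B^{n}$ (and $|c(\cdot)|\le|\cdot|$ suffices here). Along $\tau_2$ the state $q^\ast$ is permanently $\ge B$, hence inexhaustible as observer and as source and harmless as target; we therefore \emph{freeze} it, setting $E' := E\cup\{q^\ast\}$ and taking the target $\hat C$ equal to $C$ on $E$ and on $Q\setminus E'$ and equal to the bounded value $D_j(q^\ast)\le B$ on $q^\ast$. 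Using the normal form of Section~\ref{sub:a_normal_form_for_immediate_observation_protocols}, bringing $q^\ast$ to exactly $D_j(q^\ast)$ from wherever $\tau_2$ leaves it costs one extra block, i.e.\ a single bulk move; this is precisely why one must settle for $|c(\sigma)|$ rather than $|\sigma|$ and why the bound is independent of $C_0$. Then $\tau_2$, extended by this bulk move, witnesses $D_j\trans{*}C''\ecov{E'}\hat C$ with $|Q\setminus E'|=k-1$, so the induction hypothesis gives a suffix $\sigma_2$ with $|c(\sigma_2)|\in\hat C(Q)^{2^{\mathcal O(n^2\log n)}}$; since $\hat C(Q)\le C(Q)+B\le C(Q)^{2^{\mathcal O(n\log n)}}$, this stays inside $C(Q)^{2^{\mathcal O(n^2\log n)}}$. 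Finally $C''\ecov{E'}\hat C$ together with $\hat C\ecov{E}C$ gives $C''\ecov{E}C$, and concatenating $\sigma_1$, the bulk move, and $\sigma_2$ yields $\sigma$ with $|c(\sigma)|\le|c(\sigma_1)|+1+|c(\sigma_2)|$ within the bound. The exponent $n^2$ (versus $n$ in Lemma~\ref{lem:rackoff}) arises because the freezing step is applied at most $n$ times and each application replaces the size parameter $C(Q)$ by a quantity of the form $(\text{previous})^{2^{\mathcal O(n\log n)}}$ coming from Lemma~\ref{lem:rackoff}/Bozzelli--Ganty.

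\textbf{Main obstacle.} The case where $\PP$ is not $E$-increasing, in which intermediate configurations --- especially the coordinates inside $E$, which can be forced up and then down again by ``bad'' transitions $(q_s,q_o)\mapsto(q_o,q_d)$ with $q_s\in E$, $q_d\notin E$ --- may be arbitrarily large. This is exactly what defeats the finite-state-$E$-part VASS simulation underlying Lemma~\ref{lem:rackoff} and forces the passage to compressed length. Getting the ``first-large-state / freeze / recut so it stays large'' bookkeeping to go through uniformly --- in particular choosing correctly between thresholding on $E$- versus $(Q\setminus E)$-coordinates, and disposing of large $E$-coordinates (possibly by first reversing the protocol, since $\pre^*$ in $\PP$ equals $\post^*$ in the IO protocol $\PP_r$) --- is the heart of the argument, and the part I expect to be the most technical.
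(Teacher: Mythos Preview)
Your plan diverges substantially from the paper's, and as written it has a real gap.

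The paper inducts on $|E|$, not on $|Q\setminus E|$, and in the step it \emph{removes} a state from $E$. When $\PP_\tau$ is not $E$-increasing, it locates the \emph{last} occurrence in $\tau$ of a bad transition $t=(q_s,q_o)\mapsto(q_d,q_o)$ with $q_s\in E$ and $q_d\notin E$, splitting $\tau=\tau_1\,t\,\tau_2$. Then $\PP_{\tau_2}$ is $E$-increasing, so Lemma~\ref{lem:rackoff} shortens $\tau_2$ to some $\sigma_2$ of bounded \emph{uncompressed} length; this in turn bounds the configuration $C_1'$ actually needed just before $t$ (only $|\sigma_2|+C(q)$ agents are required in each $q\notin E$). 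For the prefix one relaxes the equality on $q_s$, sets $E'=E\setminus\{q_s\}$, and recurses with target $C_1'$ and strictly smaller $|E'|$. The recursion may return a $C_1''$ with a surplus $C_1''(q_s)>C_1'(q_s)$; the IO-specific repair is to fire $t$ exactly $C_1''(q_s)-C_1'(q_s)+1$ times---the observer $q_o$ persists, so $t$ remains enabled throughout---which drains the surplus into $q_d\notin E$ at the cost of a single compressed block. This is precisely where both the IO hypothesis and the passage from $|\sigma|$ to $|c(\sigma)|$ are used.

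In your scheme, by contrast, you \emph{enlarge} $E$ to $E'=E\cup\{q^\ast\}$ with $q^\ast\in Q\setminus E$ and assert that $\tau_2$ together with one ``bulk move'' witnesses $D_j\trans{*}C''\ecov{E'}\hat C$. But $\tau_2$ only gives $C'\ecov{E}C$, with no control over $C'(q^\ast)$; to obtain $\ecov{E'}$ you must land on $C''(q^\ast)=D_j(q^\ast)$ exactly. The normal form of Section~\ref{sub:a_normal_form_for_immediate_observation_protocols} does not supply such a move: it only arranges $q_s\neq q_o$ in every transition and gives no mechanism to shift agents into or out of an arbitrary $q^\ast$ without disturbing the equalities on $E$ or the lower bounds on $Q\setminus E'$. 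In the paper's argument the analogous repair works because the surplus sits in $q_s\in E$ and the bad transition $t$ already in hand moves agents from $q_s$ to $q_d\notin E$; your frozen $q^\ast\in Q\setminus E$ comes with no such tailor-made drain. The ``main obstacle'' you flag---a coordinate inside $E$ becoming large---is exactly what the paper's induction parameter sidesteps: by peeling $q_s$ out of $E$ one always recurses toward a \emph{smaller} $E$, so the finite-control VASS of Lemma~\ref{lem:rackoff} is applicable to each suffix; reversing the protocol does not help, since reversal swaps $q_s$ and $q_d$ in each transition but does not interchange the roles of $E$ and $Q\setminus E$ in the definition of $E$-increasing.
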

\begin{proof}
We prove by induction on $|E|$ that the result holds with $|c(\sigma)| \in C(Q)^{2^{e \mathcal{O}(n \log n)}}$, where $e \defeq \max\{ 1, |E|\} $, and then apply $e \leq n$.

\smallskip
\noindent \textbf{Base:} $|E| = 0$. Then $\PP$ is vacuously $E$-increasing, and the result follows from Lemma \ref{lem:rackoff}.

\smallskip
\noindent \textbf{Step:} $|E| > 0$. We use the following notation: Given a transition sequence $\rho$, we denote $\PP_\rho$ the restriction of 
$\PP$ to the transitions that occur in $\rho$.

If $\PP_\tau$ is $E$-increasing, then we can apply Lemma \ref{lem:rackoff}, and we are done.
Else, the definition of $E$-increasing shows there exist $C_1$ and $C_2$ and a decomposition $\tau = \tau_1 \, t \, \tau_2$  such that
\[
C_0 \trans{\tau_1} C_1 \trans{t} C_2 \trans{\tau_2} C' \ecov{E} C \enspace.
\]
The protocol $\PP_{\tau_2}$ is $E$-increasing, but $\PP_{t \tau_2}$ is not $E$-increasing (observe that possibly $\tau_2 = \epsilon$). 
By Lemma \ref{lem:rackoff} applied to $\PP_{\tau_2}$, there exists  $\sigma_2$ and $\tilde{C}''$ such that 
\[
C_0 \trans{\tau_1} C_1 \trans{t} C_2 \trans{\sigma_2} \tilde{C}'' \ecov{E} C
\quad \mbox{ and } \quad |\sigma_2| \in C(Q)^{2^{{\cal O}(n \log n)}} \enspace.
\]
Since $\sigma_2$ can remove at most $|\sigma_2|$ agents from a state, there exist $C_1', C_2', C''$ such that 
\[
C_0 \trans{\tau_1} C_1 \ecov{E} C_1' \trans{t} C_2' \trans{\sigma_2} C'' \ecov{E} C \quad \mbox{ and } C_1'(Q) \in C(Q)^{2^{{\cal O}(n \log n)}} \enspace. 
\]
\noindent Indeed, it suffices to define 
	\begin{itemize}
		\item $C_1'(q) = \min \{ C_1(q), |\sigma_2| + C(q)\}$ for every $q \in Q \setminus E$ and $C_1'(q)=C_1(q)$ for every $q \in E$, 
		\item $C_2'(q) = \min \{ C_2(q), |\sigma_2| + C(q)\}$ for every $q \in Q \setminus (E \cup \{ q_d \})$, $C_2'(q)=C_2(q)$ for every $q \in E$ and  $C_2'(q_d)=\min \{ C_2(q_d), 1 + |\sigma_2| + C(q)\}$ where $t = (q_o, q_s) \mapsto (q_o, q_d)$. 
	\end{itemize} 

    Recall that $\PP_{t \tau_2}$ is not $E$-increasing, and so  $t = (q_o, q_s) \mapsto (q_o, q_d)$ for states $q_s, q_d$ such that
    $q_s \in E$ and $q_d \notin E$.  (Intuitively, the occurrence of $t$  ``removes agents'' from $E$.) Let $E' \defeq E \setminus \{q_s\}$. Since $C_0 \trans{\tau_1} C_1 \ecov{E} C_1'$, we also have
    $C_0 \trans{\tau_1} C_1 \ecov{E'} C_1'$.
    By induction hypothesis, there exists $\sigma_1$ and $C_1''$ such that $C_0 \trans{\sigma_1} C_1'' \ecov{E'} C_1'$ and 
    \begin{align*}
    |c(\sigma_1)| & \in C_1'(Q)^{2^{e'{\cal O}(n \log n)}} \in \left(C(Q)^{2^{{\cal O}(n \log n)}}\right)^{2^{e'\mathcal{O}(n \log n)}} \in C(Q)^{  2^{{\cal O}(n \log n)} \cdot 2^{e'{\cal O}(n \log n)}  } \\
    & \in C(Q)^{ 2^{ {\cal O}(n \log n) + e'{\cal O}(n \log n)} } \in C(Q)^{ 2^{ e {\cal O}(n \log n)} } \enspace .
    \end{align*}   
   \noindent 
	(Observe that $C_1'' \ecov{E'} C_1'$ holds, but $C_1'' \ecov{E} C_1'$ may  not hold, we may have $C_1''(q_s) > C_1'(q_s)$.) 
	
	\noindent
	To sum up, we have configurations $C_1', C_1'', C_2', C''$ and transition sequences $\sigma_1, \sigma_2$ such that
	\[
	C_0 \trans{\sigma_1} C_1'' \ecov{E'} C_1' \trans{t} C_2' \trans{\sigma_2} C'' \ecov{E} C \quad \mbox{ and } \quad |c(\sigma_1 \, t \, \sigma_2)| \in C(Q)^{ 2^{ e {\cal O}(n \log n)}} \enspace.
	\]	

	\noindent \textbf{Claim:} There exist $C_2''$ and $C'''$ such that
	\[
	C_0 \trans{\sigma_1} C_1''  \trans{t^{C_1''(q_s)-C_1'(q_s)+1}} C_2'' \trans{\sigma_2} C'''\ecov{E} C\enspace .
	\]	
	\noindent \textbf{Proof of the claim:} Since $C_1'' \ecov{E'} C_1'$ and $C_1'$ enables $t$, so does $C_1''$.  Since $\PP$ is an IO protocol (a hypothesis we had not used so far), $C_1''$ enables not only $t$, but also the sequence $t^{C_1''(q_s)-C_1'(q_s)+1}$. So there indeed exists a configuration $C_2''$ such that 
	\[
	C_0 \trans{\sigma_1} C_1''  \trans{t^{C_1''(q_s)-C_1'(q_s)+1}} C_2'' \ .
	\] 	
	\noindent It remains to prove that
	$C_2'' \trans{\sigma_2} C''' \ecov{E} C$ holds for some configuration $C'''$. First we show $C_2'' \ecov{E} C_2'$, which amounts to proving
	$C_2'' \ecov{E'} C_2'$ and $C_2''(q_s) = C_2'(q_s)$. 

	The first part, i.e., $C_2'' \ecov{E'} C_2'$, follows from:
	\(C_1'' \trans{t^{C_1''(q_s)-C_1'(q_s)+1}} C_2''\), \(C_1''\ecov{E'}C_1'\), 
	\(C_1' \trans{t} C_2'\),
	$q_d \notin E$, which implies $q_d \notin E'$, and the fact that $t$ move agents from $q_s$ to $q_d$ (thus increasing their number in $q_d$). 
	The second part, $C_2''(q_s) = C_2'(q_s)$, is proved by	
	\[
	C_2''(q_s) = C_1''(q_s) - (C_1''(q_s)-C_1'(q_s)+1) = C_1'(q_s)-1 = C_2'(q_s) \enspace . 
	\]
	\noindent So indeed we have $C_2'' \ecov{E} C_2'$. Now, since $C_2'$ enables $\sigma_2$ and $C_2'' \ecov{E} C_2'$, the configuration $C_2''$ enables $\sigma_2$ too. So there exists a configuration $C'''$ such that $C_2'' \trans{\sigma_2} C'''$. Further, since 	
	\(
	\begin{array}[b]{@{}c@{}c@{}c@{}c@{}c@{}c@{}c@{}}
	C_1'' & \trans{t^{C_1''(q_s)-C_1'(q_s)+1}} & C_2'' & \trans{\sigma_2} & C'''  \\
	\ecov{E'} &  & \ecov{E}\\
	C_1' & \trans{\hspace{3em}t\hspace{3em}} & C_2' & \trans{\sigma_2} & C'' & \ecov{E} & C
	\end{array}
	\)
	holds, we have
	\(
	\begin{array}[b]{@{}c@{}c@{}c@{}c@{}c@{}c@{}c@{}}
	C_1'' & \trans{t^{C_1''(q_s)-C_1'(q_s)+1}} & C_2'' & \trans{\sigma_2} & C'''  \\
	\ecov{E'} &  & \ecov{E} &  & \ecov{E}\\
	C_1' & \trans{\hspace{3em}t\hspace{3em}} & C_2' & \trans{\sigma_2} & C'' & \ecov{E} & C
	\end{array}
	\)
	So $C''' \ecov{E} C'' \ecov{E} C$, and the claim is proved. \qed
	
	By the claim we have	
	\(
	C_0 \trans{\sigma_1 \, t^{C_1''(q_s)-C_1'(q_s)+1} \, \sigma_2} C''' \ecov{E} C \enspace .
	\)
	\noindent Let $\sigma = \sigma_1 t^{C_1''(q_s)-C_1'(q_s)+1} \sigma_2$. While $C_1''(q_s)-C_1'(q_s)$ can be arbitrarily large, we have $c(\sigma)= c(\sigma_1 \, t \, \sigma_2)$, and so we conclude
	\(
	C_0 \trans{\sigma} C''' \ecov{E} C \quad \mbox{ and } \quad |c(\sigma)|  \in C(Q)^{2^{e {\cal O}(n \log n)}}
	\).
\end{proof}

Theorem \ref{thm:rackoff2} allows to derive the promised bounds on a constraint for $\pre^*(\Gamma)$ and $\post^*(\Gamma)$.

\begin{theorem}
\label{thm:closureprepost}
Let $\PP$ be an IO population protocol with $n$ states, and let $\Gamma$ be a CoNF-constraint. There exists a CoNF-constraint
$\Gamma'$ satisfying $\sem{\Gamma'} = \pre^*(\Gamma)$,  $\unorm{\Gamma'} \leq \unorm{\Gamma}$ and $\lnorm{\Gamma'} \in \unorm{\Gamma} \left( \lnorm{\Gamma} + \unorm{\Gamma} \right)^{ 2^{ {\cal O}(n^2 \log n)} }$. 
Further, $\Gamma'$ can be constructed in 
${(2+\unorm{\Gamma})}^n \cdot \unorm{\Gamma} \left( \lnorm{\Gamma} + \unorm{\Gamma} \right)^{ 2^{ {\cal O}(n^2 \log n)} }$ 
time and space. 
Further, the same holds for $\post^*(\Gamma)$.
\end{theorem}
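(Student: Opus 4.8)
### Proof plan for Theorem~\ref{thm:closureprepost}

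The plan is to reduce the statement to the structural and combinatorial facts already established, and to extract explicit bounds by tracking the $L$- and $U$-norms through the fixpoint computation. I would work with $\pre^*$ (the $\post^*$ case follows by reversing transitions as in the previous theorems, which preserves $n$ and all norms). By the proof of the earlier closure theorem, $\pre^*(\Gamma) = \preco^*(\Gamma) = \bigcup_{k=0}^{j} \preco^k(\Gamma)$ for the first index $j$ at which the $\sqsubseteq$-increasing chain $\Gamma \sqsubseteq \preco(\Gamma) \sqsubseteq \preco^2(\Gamma) \sqsubseteq \cdots$ stabilizes. So $\Gamma' \defeq \preco^j(\Gamma)$ works; the task is to bound (a) the $U$- and $L$-norms of $\Gamma'$ and (b) the number of iterations $j$ and the size of each iterate, from which the time/space bound follows.

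First, the $U$-norm bound $\unorm{\Gamma'} \leq \unorm{\Gamma}$ is immediate: by Lemma~\ref{lem:postsize} (applied with reversed transitions), each application of $\fire{\cdot}$ does not increase the $U$-norm, and $\preco$ is a finite union of such operations, so by induction $\unorm{\preco^k(\Gamma)} \leq \unorm{\Gamma}$ for all $k$. Second, for the $L$-norm I would use Theorem~\ref{thm:rackoff2} as follows. A minterm $M'=(L',U')$ belongs to some $\preco^k(\Gamma)$ iff there is a transition sequence $\sigma$, a source minterm $M=(L,U)\in\Gamma$, and configurations realizing the acceleration with $|c(\sigma)|$ at most $k$. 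The key observation is that one only ever needs minterms $M'$ whose $L$-norm is bounded by the cost of a single $E$-covering witness, where $E$ is the set of coordinates that $U'$ pins to a finite value: covering a minterm of $\Gamma$ amounts to reaching a configuration that $E$-covers a fixed target $C$ with $C(Q) \leq \lnorm{\Gamma}+\unorm{\Gamma}$ (the target bounds both the forced finite coordinates and the required lower bounds). By Theorem~\ref{thm:rackoff2}, such a witness can be taken with $|c(\sigma)| \in C(Q)^{2^{\mathcal{O}(n^2\log n)}} \subseteq (\lnorm{\Gamma}+\unorm{\Gamma})^{2^{\mathcal{O}(n^2\log n)}}$, and since each $\fire{\cdot}$ step of Definition~\ref{def:fire} raises the $L$-norm by at most the transferred amount $k \leq \unorm{\Gamma}$ (case~3) or leaves it essentially unchanged (case~4), composing $|c(\sigma)|$ such steps gives $\lnorm{\Gamma'} \in \unorm{\Gamma}(\lnorm{\Gamma}+\unorm{\Gamma})^{2^{\mathcal{O}(n^2\log n)}}$. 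The multiplicative factor $\unorm{\Gamma}$ is exactly the per-step $L$-increase, and it collapses to a constant when $\unorm{\Gamma}=0$, consistent with Proposition~\ref{prop:bounds-upward-closed}.

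For the constructive part, I would bound the termination index $j$ and the representation size of each $\preco^k(\Gamma)$. Once the norms are pinned down, a minterm appearing anywhere in the computation is described by a pair $(L',U')$ with $U'$ ranging over the at most $(2+\unorm{\Gamma})^n$ mappings of $U$-norm at most $\unorm{\Gamma}$ and $L'$ ranging over values bounded by the $L$-norm bound above; hence the total number of distinct minterms, and therefore $j$ (each iteration that does not stabilize strictly enlarges the finite set of minterms, up to the $\sqsubseteq$-equivalence), is within $(2+\unorm{\Gamma})^n \cdot \unorm{\Gamma}(\lnorm{\Gamma}+\unorm{\Gamma})^{2^{\mathcal{O}(n^2\log n)}}$. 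Each iteration applies $\fire{\cdot}$ for each of the $|\Delta| \leq n^4$ transitions to each current minterm and then normalizes the union (Propositions~\ref{prop:conf} and~\ref{prop:oponconf}), all polynomial in the representation size, and checking $\preco^{k+1}(\Gamma)\sqsubseteq\preco^k(\Gamma)$ is likewise a polynomial-time comparison. Multiplying the number of iterations by the per-iteration cost yields the claimed time and space bound ${(2+\unorm{\Gamma})}^n \cdot \unorm{\Gamma}(\lnorm{\Gamma}+\unorm{\Gamma})^{2^{\mathcal{O}(n^2\log n)}}$. I expect the main obstacle to be the second paragraph: turning the ``covering a target of size $\lnorm{\Gamma}+\unorm{\Gamma}$'' intuition into a rigorous argument that every \emph{necessary} minterm of $\pre^*(\Gamma)$ is witnessed by a short compressed sequence — one must be careful that Theorem~\ref{thm:rackoff2} bounds $|c(\sigma)|$ only, so the acceleration steps $\fire{\cdot}$ (which correspond exactly to the starred factors $t_i^*$) must be matched one-to-one with the compressed blocks, and that the relativization set $E$ used when invoking Theorem~\ref{thm:rackoff2} is correctly identified with the finite-upper-bound coordinates of the target minterm.
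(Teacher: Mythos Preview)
Your proposal is correct and follows essentially the same route as the paper. The paper makes two of your steps more explicit: it identifies the relativization set as $E_{(L,U)}=\{q_i\mid U(x_i)<\infty\}$ and introduces the finite family $\mathcal{C}^{\min}_{(L,U)}$ of $E$-minimal targets (rather than a single ``fixed target $C$''), and it states the per-step bound as $\lnorm{\fire{(L,U)}}\le\lnorm{(L,U)}+\unorm{(L,U)}-1$; with these in hand, Theorem~\ref{thm:rackoff2} directly gives the iteration bound $K$ (your $j$) without the separate minterm-counting argument you add in the third paragraph, which is redundant but harmless.
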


\begin{proof}
The bound on $\unorm{\Gamma'}$ follows from Lemma \ref{lem:postsize}. The bound on $\lnorm{\Gamma'}$ is proved in a similar way to Proposition 
\ref{prop:bounds-upward-closed}, but using Theorem \ref{thm:rackoff2} instead of Theorem \ref{thm:rackoff}. 
Let $(L,U)$ be a counting minterm in $\Gamma$. We define the set of states $E_{(L,U)} = \{ q_i \mid U(x_i) < \infty \}$ and $\mathcal{C}_{(L,U)}^{\min} = \{ C \mid \forall q_i \in E_{(L,U)}, L(x_i) \leq C(q_i) \leq U(x_i) \text{ and } \forall q_i \in E_{(L,U)}, C(q_i) = L(x_i) \}$ the configurations of $(L,U)$ minimal over $Q \backslash E_{(L,U)}$. Notice that a configuration is in $(L,U)$ if and only if it covers a configuration in $\mathcal{C}_{(L,U)}^{\min}$. By applying Theorem \ref{thm:rackoff2} to every $C \in \mathcal{C}_{(L,U)}^{\min}$ and to $E_{(L,U)}$, we get
$\pre^*(L,U) = \bigcup_{i=0}^K\preco^i(L,U)$ for $K$ the bound in Theorem \ref{thm:rackoff2} but with $\left( \sum_{q_i \in Q \backslash E} L(x_i) + \sum_{q_i \in E} U(x_i) \right)$ instead of $C(Q)$.
Now since $\Gamma$ is the union of such minterms $(L,U)$, and by definition of the $L$ and $U$-norms, $\pre^*(\Gamma) = \bigcup_{i=0}^K\preco^i(\Gamma)$ for $K \in \left( \lnorm{\Gamma} + \unorm{\Gamma} \right)^{ 2^{ {\cal O}(n^2 \log n)} }$. By Definition \ref{def:fire}, we have $\lnorm{\fire{(L, U)}} \leq \lnorm{(L,U)} + ( \unorm{(L,U)} - 1 )$. Using $\unorm{\fire{\Gamma}} \leq \unorm{\Gamma}$, we reason by induction and get $\lnorm{\preco^i(\Gamma)} \leq \lnorm{\Gamma} + i ( \unorm{\Gamma} - 1 )$ for all $i$, and the result on the $L$-norm follows.

The algorithm needs linear time and space in the number of minterms of $\Gamma'$. 
An upper bound on the number of minterms $(L, U)$ is computed as follows. Since $\lnorm{\Gamma'} \in \unorm{\Gamma} \left( \lnorm{\Gamma} + \unorm{\Gamma} \right)^{ 2^{ {\cal O}(n^2 \log n)} }$, there are at most $(1+ \lnorm{\Gamma'})^n \in \unorm{\Gamma} \left( \lnorm{\Gamma} + \unorm{\Gamma} \right)^{ 2^{ {\cal O}(n^2 \log n)} }$ possibilities for $L$, and since $\unorm{\Gamma'} \leq \unorm{\Gamma}$ at most $\left(2+\unorm{\Gamma}\right)^n$ possibilities for $U$. %
\end{proof}

The following result characterizes the size of counting constraints.

\begin{corollary}
\label{coro:closure}
Let $\PP$ be an IO protocol with $n$ states. Given $c \geq 2, d \geq 1$, let ${\cal G}(c, d)$ be the class of  CoNF-constraints $\Gamma$ such that $\lnorm{\Gamma}, \unorm{\Gamma} \leq c^{2^{d \cdot (n^2 \log n)}}$. There exists a constant $k$ that does not depend on $n$ or $\PP$ such that :
\begin{enumerate}
\item for every $\Gamma_1, \Gamma_2 \in {\cal G}(c, d)$, there exists $\Gamma \in {\cal G}(c, d)$ such that $\sem{\Gamma} = \sem{\Gamma_1} \cup \sem{\Gamma_2}$.
\item for every $\Gamma_1, \Gamma_2 \in {\cal G}(c, d)$, there exists $\Gamma \in {\cal G}(c, d+1)$ such that $\sem{\Gamma} = \sem{\Gamma_1} \cap \sem{\Gamma_2}$.
\item for every $\Gamma_1 \in {\cal G}(c, d)$, there exists $\Gamma \in {\cal G}(c, d+1)$ such that $\sem{\Gamma} = \N^n \setminus \sem{\Gamma_1}$.
\item for every $\Gamma_1 \in {\cal G}(c, d)$, there exists $\Gamma \in {\cal G}(c,d+k+2)$ such that $\sem{\Gamma} = \pre^*\left(\sem{\Gamma_1}\right)$.
\item for every $\Gamma_1 \in {\cal G}(c, d)$, there exists $\Gamma \in {\cal G}(c,d+k+2)$ such that $\sem{\Gamma} = \post^*\left(\sem{\Gamma_1}\right)$.
\end{enumerate}
\end{corollary}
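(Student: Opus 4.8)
The plan is to read this as a bookkeeping corollary: items~1--3 are quantitative instances of Proposition~\ref{prop:oponconf}, and items~4--5 of Theorem~\ref{thm:closureprepost}; the only work is to check that the bounds quoted there fit inside the classes $\mathcal{G}(c,d)$, $\mathcal{G}(c,d+1)$, $\mathcal{G}(c,d+k+2)$. Write $A \defeq c^{2^{d\cdot(n^2\log n)}}$ for the threshold of $\mathcal{G}(c,d)$, so that $\Gamma \in \mathcal{G}(c,d)$ means $\lnorm{\Gamma}, \unorm{\Gamma} \le A$, and record the two arithmetic facts used throughout: $c^{2^{(d+j)\cdot(n^2\log n)}} = A^{2^{j\cdot(n^2\log n)}}$, and (for $n \ge 2$, the case $n \le 1$ being degenerate) once $j \ge 1$ the quantity $2^{j\cdot(n^2\log n)}$ exceeds any fixed polynomial in $n$, so $A^{2^{j\cdot(n^2\log n)}}$ dominates $p(n)\cdot A$ for any fixed polynomial $p$, using $A \ge c \ge 2$.

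For item~1, Proposition~\ref{prop:oponconf} supplies $\Gamma$ with $\lnorm{\Gamma} \le \max\{\lnorm{\Gamma_1},\lnorm{\Gamma_2}\} \le A$ and $\unorm{\Gamma} \le A$, so $\Gamma \in \mathcal{G}(c,d)$. For item~2, it supplies $\Gamma$ with $\lnorm{\Gamma} \le \lnorm{\Gamma_1}+\lnorm{\Gamma_2} \le 2A$ and $\unorm{\Gamma} \le 2A$, and $2A \le A^2 \le A^{2^{n^2\log n}} = c^{2^{(d+1)(n^2\log n)}}$, so $\Gamma \in \mathcal{G}(c,d+1)$. For item~3, it supplies $\Gamma$ with $\unorm{\Gamma} \le n\lnorm{\Gamma_1} \le nA$ and $\lnorm{\Gamma} \le n\unorm{\Gamma_1}+n \le 2nA$, both below $A^{2^{n^2\log n}} = c^{2^{(d+1)(n^2\log n)}}$ by the second fact, so $\Gamma \in \mathcal{G}(c,d+1)$.

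For items~4 and~5, fix once and for all an absolute constant $k$ bounding the Landau symbol in Theorem~\ref{thm:closureprepost}, so that for every IO protocol with $n$ states and every CoNF-constraint $\Gamma_1$ there is $\Gamma$ with $\sem{\Gamma} = \pre^*(\sem{\Gamma_1})$ (and separately one with $\sem{\Gamma} = \post^*(\sem{\Gamma_1})$), satisfying $\unorm{\Gamma} \le \unorm{\Gamma_1}$ and $\lnorm{\Gamma} \le \unorm{\Gamma_1}\,(\lnorm{\Gamma_1}+\unorm{\Gamma_1})^{2^{k\cdot n^2\log n}}$; such a $k$ exists and does not depend on $\PP$, $\Gamma_1$, $c$ or $d$ precisely because the theorem asserts the constant is uniform and covers both $\pre^*$ and $\post^*$. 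Substituting $\lnorm{\Gamma_1},\unorm{\Gamma_1} \le A$ gives $\unorm{\Gamma} \le A$ and $\lnorm{\Gamma} \le A\,(2A)^{2^{k\cdot n^2\log n}} \le A^{\,1+2^{k\cdot n^2\log n+1}} \le A^{2^{(k+2)\cdot n^2\log n}} = c^{2^{(d+k+2)\cdot(n^2\log n)}}$, where the middle step uses $1 + 2^{k\cdot n^2\log n+1} \le 2^{(k+2)\cdot n^2\log n}$ for $n \ge 2$. Since $A \le c^{2^{(d+k+2)\cdot(n^2\log n)}}$ as well, $\Gamma \in \mathcal{G}(c,d+k+2)$, which is items~4 and~5.

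I expect the only place that needs care to be the choice of $k$: it must simultaneously absorb the Landau constant inherited from Theorem~\ref{thm:rackoff2}, the leading factor $\unorm{\Gamma_1}$, the factor $2$ from $\lnorm{\Gamma_1}+\unorm{\Gamma_1} \le 2A$, and the additive lower-order terms in the exponents, while remaining independent of $n$, $\PP$, $c$, $d$ — the explicit ``$+2$'' of headroom in $d+k+2$ is what lets the exponent arithmetic close without folding any of this into $k$. The only other loose end is the degenerate regime $n \le 1$ (where $n^2\log n$ may be $0$), disposed of by a one-line separate argument since such protocols are trivial.
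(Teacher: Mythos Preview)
Your proposal is correct and follows exactly the approach the paper takes: the paper's own proof is a two-line sketch saying that items 1--3 follow from Proposition~\ref{prop:oponconf} and that items 4--5 follow from Theorem~\ref{thm:closureprepost} with $k$ taken to be the Landau constant there. You have simply spelled out the exponent arithmetic that the paper leaves implicit, and your bookkeeping (including the use of the ``$+2$'' to absorb the leading $A$ and the factor~$2$) is sound.
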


The first three bounds follow from Prop~\ref{prop:oponconf}.
For the last two, the constant $k$ is the one from the Landau symbol in Theorem \ref{thm:closureprepost}.

\section{An Algorithm for Deciding Well Specification}

We show that the well-specification and correctness problems can be solved in exponential space for IO protocols, 
improving on the result for general protocols stating that they are at least as hard as 
the reachability problem for Petri nets \cite{DBLP:conf/concur/EsparzaGLM15}.
We first introduce some notions.

\begin{definition}\label{def:consensusandco}
Given a population protocol $\PP$, a configuration $C$ is a \emph{stable $b$-consensus} if $C$ is a $b$-consensus and so is $C'$ for every $C'$ reachable from $C$.
Let $\mathcal{C}_b$ and $\mathcal{ST}_b$ denote the sets of $b$-consensus and stable $b$-consensus configurations of $\PP$. Observe that $\mathcal{ST}_b = \overline{\pre^*(\overline{\mathcal{C}_b})}$.
\end{definition}

Next, we characterize the well-specified protocols starting with the following lemma.
\begin{lemma}
\label{lem:auxfair}
Let $\PP$ be a population protocol, let $C_0,C_1,C_2,\ldots$ be a fair execution of $\PP$, and 
let $S$ be a set of configurations. If $S$ is reachable from $C_i$ for infinitely many indices $i \geq 0$, then $C_j \in S$ for infinitely many indices $j \geq 0$.
\end{lemma}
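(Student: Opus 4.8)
The plan is to use the pigeonhole principle over finitely many configurations combined with a fairness argument. First I would fix the fair execution $C_0, C_1, C_2, \ldots$ and observe that, since $S$ is reachable from $C_i$ for infinitely many $i$, for each such $i$ there is a finite transition sequence $w_i$ with $C_i \trans{w_i} D_i$ for some $D_i \in S$. The key observation is that the length $|w_i|$ need not be bounded a priori, so I cannot directly apply a ``single configuration appears infinitely often'' argument to $C_i$ alone.

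The main idea to get around this is to argue by contradiction: suppose $C_j \in S$ for only finitely many $j$, so there is some index $N$ beyond which no configuration lies in $S$. Restrict attention to the tail $C_N, C_{N+1}, \ldots$, which is still a fair execution and from infinitely many of whose configurations $S$ is reachable. Among these infinitely many ``good'' indices $i \geq N$, only finitely many distinct configurations can occur (the protocol is a finite-state system for each fixed population size, and along an execution the total number of agents is invariant, so only finitely many configurations are visited at all). Hence some configuration $C$ occurs as $C_i$ for infinitely many good indices $i$, and from $C$ one can reach $S$ via a fixed finite witness word $w = \delta_1 \cdots \delta_k$. Now I would apply fairness iteratively along $w$: since $C = C_i$ infinitely often and $\delta_1$ is enabled at $C$, fairness gives that the step $C \trans{\delta_1} C'$ is taken infinitely often, so $C' = C'_1$ (the configuration after firing $\delta_1$) appears infinitely often; repeating this for $\delta_2, \ldots, \delta_k$ shows that the configuration $D \in S$ reached by $w$ from $C$ appears infinitely often in the execution, in particular at some index $j > N$. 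This contradicts the choice of $N$.

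The step I expect to be the main obstacle is the iterated application of fairness along the witness word: one has to be careful that after firing $\delta_1$ from the infinitely-many occurrences of $C$, the resulting configuration $C'_1$ genuinely appears infinitely often (this is exactly what the fairness condition delivers, since it says that if $C$ recurs infinitely often then every enabled step from $C$, and hence the specific step $C \trans{\delta_1} C'_1$, is taken infinitely often), and then to re-invoke the pigeonhole/fairness argument at $C'_1$, and so on for a constant number $k$ of times. A minor subtlety is handling the degenerate case where the witness word $w$ is empty, i.e. $C \in S$ already, but then $C$ itself appears infinitely often and we are immediately done. I would also note explicitly at the outset that all configurations along a given execution have the same total agent count as $C_0$, which is what makes ``finitely many configurations occur'' legitimate.
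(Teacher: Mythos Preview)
Your proposal is correct and follows essentially the same approach as the paper: both use that only finitely many configurations are reachable (since the agent count is preserved), and then iterate the fairness condition along a bounded-length witness path into $S$. The paper packages this as a short induction on the distance to $S$ (bounded by the number $K$ of reachable configurations) rather than via contradiction, but the underlying argument is the same; in particular your ``iterated fairness along a fixed witness word'' is exactly the content of the paper's inductive step, and the contradiction wrapper you add is not needed since the iteration already yields infinitely many visits to $S$ directly.
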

\begin{proof}
Let $n$ be the number of states of $\PP$ and let $m$ be the number of agents of $C_0$. Then there are 
at most $K \defeq (m+1)^n$ configurations reachable from $C_0$. 
So for infinitely many indices $i \geq 0$ we have $C_i \in \cup_{i\leq K} \pre^i(S)$. 
We proceed by induction on $K$. If $K = 0$, then $C_i \in S$ and we are done. 
If $K > 0$, then by fairness there exist infinitely many indices $j \geq 0$ such that $C_j \in \cup_{i\leq K-1} \pre^{i}(S)$, and 
we conclude by induction hypothesis.
\end{proof}

\begin{proposition}
\label{prop:decproc}
A population protocol $\PP$ is well-specified if{}f the following hold:
\begin{enumerate}
	\item \(\post^*(\mathcal{I}) \subseteq \pre^*(\mathcal{ST}_0 \cup \mathcal{ST}_1 ) \) (or, equivalently, \(\post^*(\mathcal{I}) \cap \overline{ \pre^*(\mathcal{ST}_0)}  \cap \overline{\pre^*(\mathcal{ST}_1 )} = \emptyset  \) ); 
	\item \(\pre^*(\mathcal{ST}_0 ) \cap \pre^*(\mathcal{ST}_{1} ) \cap  \mathcal{I} = \emptyset\).
\end{enumerate}
\end{proposition}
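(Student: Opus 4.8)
\textbf{Proof plan for Proposition~\ref{prop:decproc}.} The plan is to characterize ill-specification directly and then negate. A protocol is ill-specified iff some initial configuration $C_0 \in \mathcal{I}$ admits a fair execution that does not stabilize to a fixed value; there are two distinct ways this can happen, and conditions (1) and (2) are designed to rule out exactly these.

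First I would establish the forward direction (well-specified implies (1) and (2)) by contraposition, showing that the negation of either condition yields an ill-specified protocol. If (1) fails, pick $C \in \post^*(\mathcal{I})$ with $C \notin \pre^*(\mathcal{ST}_0 \cup \mathcal{ST}_1)$, and let $C_0 \in \mathcal{I}$ reach $C$. From $C$ I extend to a fair execution (any configuration is the start of some fair execution, since fairness only constrains the limit behaviour); since no configuration reachable from $C$ lies in $\mathcal{ST}_0 \cup \mathcal{ST}_1$, the execution never enters a stable consensus, so it cannot stabilize to any $b$ (stabilizing to $b$ forces a $b$-consensus that, together with all its successors being $b$-consensus by persistence of the execution tail, would have to be stable — here I need the small observation that the suffix from a stabilization point consists entirely of $b$-consensus configurations reachable from one another, hence that configuration is in $\mathcal{ST}_b$). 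So the protocol is ill-specified. If (2) fails, pick $C_0 \in \mathcal{I} \cap \pre^*(\mathcal{ST}_0) \cap \pre^*(\mathcal{ST}_1)$; then from $C_0$ there is a path to a stable $0$-consensus and a path to a stable $1$-consensus, and each of these can be completed to a fair execution stabilizing to $0$ and to $1$ respectively, so $\PP$ is not well-specified.

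For the converse (conditions (1) and (2) imply well-specified), suppose (1) and (2) hold and let $C_0, C_1, \ldots$ be a fair execution with $C_0 \in \mathcal{I}$. Every $C_i$ is in $\post^*(\mathcal{I})$, so by (1) every $C_i \in \pre^*(\mathcal{ST}_0 \cup \mathcal{ST}_1)$, i.e.\ $\mathcal{ST}_0 \cup \mathcal{ST}_1$ is reachable from every $C_i$ — in particular from infinitely many $i$. By Lemma~\ref{lem:auxfair} the execution visits $\mathcal{ST}_0 \cup \mathcal{ST}_1$; say $C_j \in \mathcal{ST}_b$ for some $j$ and $b$. Since $C_j$ is a stable $b$-consensus, every configuration reachable from it is a $b$-consensus, so the execution tail $C_j, C_{j+1}, \ldots$ consists of $b$-consensus configurations and the execution stabilizes to $b$. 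It remains to argue that $b$ does not depend on the execution: if two fair executions from the same $C_0$ stabilized to $0$ and to $1$, then $C_0$ would reach both $\mathcal{ST}_0$ and $\mathcal{ST}_1$ (via the respective stabilization points), contradicting (2) since $C_0 \in \mathcal{I}$. Hence all fair executions from $C_0$ stabilize to the same value, so $\PP$ is well-specified. The equivalent reformulation of (1) in the statement is immediate from $\mathcal{ST}_b = \overline{\pre^*(\overline{\mathcal{C}_b})}$ and De Morgan.

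The main obstacle is the careful handling of fairness: I must verify that a finite path ending at a configuration $C$ can always be extended to a \emph{fair} infinite execution (or remain a maximal finite one), and that "stabilizes to $b$" is genuinely equivalent to "reaches $\mathcal{ST}_b$" along a fair execution — the subtlety being that a non-fair execution could stay in $b$-consensus configurations without ever reaching a \emph{stable} one, whereas fairness forces the dichotomy via Lemma~\ref{lem:auxfair}. Everything else is bookkeeping with the definitions of $\mathcal{ST}_b$, $\post^*$, and $\pre^*$.
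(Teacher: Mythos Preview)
Your approach mirrors the paper's: split well-specification into ``every fair execution stabilizes'' and ``no two fair executions from the same $C_0$ stabilize to different values'', then show these correspond to conditions (1) and (2) respectively, invoking Lemma~\ref{lem:auxfair} for the fairness reasoning. The paper organises this as two biconditionals rather than one direction by contraposition, but the content is identical.

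There is one genuine imprecision to fix. Your parenthetical justification that a fair execution stabilising to $b$ must visit $\mathcal{ST}_b$ --- ``the suffix from a stabilization point consists entirely of $b$-consensus configurations reachable from one another, hence that configuration is in $\mathcal{ST}_b$'' --- is not correct as written. The configurations in the tail are $b$-consensuses, but a configuration $C_n$ at the stabilisation point may still have reachable successors \emph{off this particular execution} that are not $b$-consensuses; nothing in the bare definition of ``stabilises to $b$'' rules this out. You rightly flag this as the main obstacle and name fairness and Lemma~\ref{lem:auxfair} as the remedy, but the argument you need is: by finiteness of the reachable configuration space some $D$ occurs infinitely often; by fairness (equivalently, Lemma~\ref{lem:auxfair} applied to each singleton reachable from $D$) every configuration reachable from $D$ also occurs infinitely often, hence lies in the stabilised tail and is a $b$-consensus; therefore $D \in \mathcal{ST}_b$. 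The paper's own sentence ``So $\mathcal{ST}_b$ is reachable from every configuration of the execution'' glosses over exactly the same point, so your level of detail is comparable --- just replace the faulty parenthetical with the correct argument. (Minor: the equivalent reformulation of (1) only needs $\pre^*(\mathcal{ST}_0 \cup \mathcal{ST}_1) = \pre^*(\mathcal{ST}_0) \cup \pre^*(\mathcal{ST}_1)$ and De~Morgan; the identity $\mathcal{ST}_b = \overline{\pre^*(\overline{\mathcal{C}_b})}$ plays no role there.)
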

\begin{proof}
We start with \(\mathcal{ST}_b\) which is defined (Definition \ref{def:consensusandco}) as the set of configurations $C$ such that \(C\) is a $b$-consensus and so is $C'$ for every $C'$ reachable from $C$.

By definition, \(\PP\) is \emph{well-specified} if for every input configuration $C_0\in \mathcal{I}$, every fair execution of $\PP$ starting at $C_0$ stabilizes to the same value \(b \in \{0,1\}\). Equivalently,  \(\PP\) is \emph{well-specified} if every input configuration $C_0\in \mathcal{I}$ satisfies the following two conditions:
\begin{alphaenumerate}
	\item every fair execution starting at $C_0$ stabilizes to some value; and
\item no two fair executions starting at $C_0$ stabilize to different values (i.e., to \(0\) and to \(1\) ). 
\end{alphaenumerate}
We claim that \textsf{\textbf{(a)}} is equivalent to:\\
\hspace*{\stretch{1}}for every $C \in \post^*(\mathcal{I})$ there exists $C'$ such that $C \trans{*} C'$ and  $C' \in \mathcal{ST}_0 \cup \mathcal{ST}_1$.\hfill \textsf{\textbf{(A)}}  

Assume \textsf{\textbf{(a)}} holds, and let $C \in \post^*(\mathcal{I})$. Then $C_0 \trans{*} C$ for some $C_0 \in \mathcal{I}$. Extend $C_0 \trans{*} C$ to a fair execution. By \textsf{\textbf{(a)}}, the execution stabilizes to some value $b$. So $\mathcal{ST}_b$ is reachable from every configuration of the execution.  By Lemma \ref{lem:auxfair}, the execution reaches a configuration $C' \in \mathcal{ST}_b$. For the other direction, assume \textsf{\textbf{(A)}} holds, and consider a fair execution starting at $C_0 \in \mathcal{I}$. By Lemma \ref{lem:auxfair}, the execution reaches a configuration of $\mathcal{ST}_b$ for $b \in \{0,1\}$. By the definition of  $\mathcal{ST}_b$, all successor configurations also belong to $\mathcal{ST}_b$, and so the execution stabilizes to $b$. 
Now we claim that \textsf{\textbf{(b)}} is equivalent to:\\
\hspace*{\stretch{1}}no configuration $C \in \post^*(\mathcal{I})$ can reach both $\mathcal{ST}_0$ and $ \mathcal{ST}_1$.\hfill \textsf{\textbf{(B)}}  

Assume \textsf{\textbf{(B)}} does not hold, i.e.,  there is $C \in \post^*(\mathcal{I})$
and configurations $C_0 \in \mathcal{ST}_0$ and $C_1 \in \mathcal{ST}_1$ such that $C \trans{*} C_0$ and $C \trans{*} C_1$. These two executions can be extended to fair executions, and by the definition of $\mathcal{ST}_0$ and $\mathcal{ST}_1$ these executions stabilize to $0$ and $1$, respectively. 
So \textsf{\textbf{(b)}} does not hold.

Assume now that \textsf{\textbf{(b)}} does not hold. Then two fair executions starting at $C_0$ stabilize to different values.
So $C_0$ can reach both  $\mathcal{ST}_0$ and $ \mathcal{ST}_1$, and \textsf{\textbf{(B)}}   does not hold.

So \textsf{\textbf{(a)}} and \textsf{\textbf{(b)}} are equivalent to \textsf{\textbf{(A)}}   and \textsf{\textbf{(B)}}. Since \textsf{\textbf{(A)}}   is equivalent to \(\post^*(\mathcal{I}) \subseteq \pre^*(\mathcal{ST}_0 \cup \mathcal{ST}_1 ) \), and \textsf{\textbf{(B)}}  is equivalent to \(\pre^*(\mathcal{ST}_0 ) \cap \pre^*(\mathcal{ST}_{1} ) \cap  \mathcal{I} = \emptyset \), we are done.
\end{proof}

\begin{theorem}
\label{thm:main}
The well specification problem for IO protocols is in \EXPSPACE\ and is \PSPACE-hard.
\end{theorem}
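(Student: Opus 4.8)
The plan for the \EXPSPACE\ upper bound is to combine Proposition~\ref{prop:decproc} with the size bounds of Corollary~\ref{coro:closure}. First I would replace $\PP$ by an IO protocol in normal form (Section~\ref{sub:a_normal_form_for_immediate_observation_protocols}): this is a linear blow-up, preserves well-specification, and leaves the set $\mathcal{I}$ of initial configurations a counting set whose $L$- and $U$-norms are polynomial in the number $n$ of states. By Proposition~\ref{prop:decproc}, $\PP$ is well-specified iff $\post^*(\mathcal{I}) \cap \overline{\pre^*(\mathcal{ST}_0)} \cap \overline{\pre^*(\mathcal{ST}_1)} = \emptyset$ and $\pre^*(\mathcal{ST}_0) \cap \pre^*(\mathcal{ST}_1) \cap \mathcal{I} = \emptyset$, where $\mathcal{ST}_b = \overline{\pre^*(\overline{\mathcal{C}_b})}$ and $\mathcal{C}_b$ is the counting set of $b$-consensus configurations, of norm at most $n$. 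Each of these two sets is built from $\mathcal{I}$ and the $\mathcal{C}_b$ by a \emph{constant} number of Boolean operations and applications of $\pre^*$ and $\post^*$; by Corollary~\ref{coro:closure} each such operation raises the level parameter by at most a constant, so both sets are denoted by CoNF-constraints whose $L$- and $U$-norms are bounded by $N \defeq 2^{2^{\mathcal{O}(n^2 \log n)}}$.

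The decisive observation --- and what I expect to be the main obstacle --- is that I would \emph{not} compute these constraints: a CoNF-constraint with norms of size $N$ can have $2^{2^{\mathcal{O}(n^2\log n)}}$ minterms, so writing it down costs doubly-exponential space. Instead I use only the norm bound. A counting set given by a CoNF-constraint of $L$-norm at most $N$ is non-empty iff it contains the minimal configuration $L$ of one of its satisfiable minterms, a configuration with at most $N$ agents. Hence the first condition fails iff there is a configuration $C$ with at most $N$ agents (so $C$ has only $2^{\mathcal{O}(n^2\log n)}$ bits) such that $C \in \post^*(\mathcal{I})$ but $C \notin \pre^*(\mathcal{ST}_0)$ and $C \notin \pre^*(\mathcal{ST}_1)$; and the second fails iff there is such a $C$ that is initial and lies in $\pre^*(\mathcal{ST}_0) \cap \pre^*(\mathcal{ST}_1)$.

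It then remains to verify, for a \emph{guessed} configuration $C$ of $2^{\mathcal{O}(n^2\log n)}$ bits, the required reachability properties. Since interactions preserve the number of agents, everything reachable from or co-reachable to $C$ lies in the finite configuration graph on the at most $(C(Q)+1)^n$ configurations with $C(Q)$ agents, each of which is stored in $2^{\mathcal{O}(n^2\log n)}$ bits. Membership of $C$ in $\post^*(\mathcal{I})$ is a reachability query in the reverse graph. Unfolding $\mathcal{ST}_b = \overline{\pre^*(\overline{\mathcal{C}_b})}$, ``$C \in \pre^*(\mathcal{ST}_b)$'' reads ``$C$ can reach some $D$ from which no non-$b$-consensus is reachable'' --- a bounded nesting of reachability queries with one complementation, since being a $b$-consensus is a local test on a configuration --- and ``$C \notin \pre^*(\mathcal{ST}_b)$'' is its negation. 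Each reachability query is in $\mathrm{NSPACE}(2^{\mathcal{O}(n^2\log n)})$, a class closed under complement by the Immerman--Szelepcs\'enyi theorem, so the bounded nesting stays inside it and the outer guess of $C$ is absorbed. The whole decision procedure is therefore in $\mathrm{NSPACE}(2^{\mathcal{O}(n^2\log n)})$, hence in \EXPSPACE\ by Savitch's theorem; since \EXPSPACE\ is closed under complement, well-specification is in \EXPSPACE.

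For the \PSPACE\ lower bound I would give a polynomial-time reduction from a standard \PSPACE-complete problem, such as acceptance by a polynomial-space Turing machine (or QBF), building an IO protocol whose agents encode a polynomial-size tape or assignment in such a way that the protocol is well-specified precisely when the instance is negative --- the ill-specified behaviour being reachable exactly along an accepting or satisfying computation. I expect this direction to be routine once an encoding of bounded computations by IO agents is fixed; the substance of the theorem, and its only real difficulty, is keeping the upper bound at \EXPSPACE\ rather than doubly-exponential space by exploiting Corollary~\ref{coro:closure} solely to bound the size of a minimal witness and then discharging the reachability obligations inside its singly-exponential-bit-size configuration graph.
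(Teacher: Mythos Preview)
Your upper-bound argument is correct and takes a genuinely different route from the paper's. Both proofs use Corollary~\ref{coro:closure} to bound by $N = 2^{2^{\mathcal{O}(n^2\log n)}}$ the norms of CoNF-constraints denoting the two test sets of Proposition~\ref{prop:decproc}, and both then guess a witness configuration $C$ of at most $N$ agents. The divergence is in how the reachability obligations on $C$ are discharged. The paper stays in the symbolic domain: to decide $C \in \post^*(\mathcal{I})$ it guesses a (doubly-exponentially long) sequence of minterms, each obtained from its predecessor by one $\fire{\cdot}$ step as in Definition~\ref{def:fire}, storing only the current minterm and a counter; to decide $C \in \pre^*(\mathcal{ST}_b)$ it guesses a minterm $M$ purportedly inside $\mathcal{ST}_b$, verifies $\sem{M} \cap \pre^*(\overline{\mathcal{C}_b}) = \emptyset$ by the same minterm-walking technique, and then walks minterms from $M$ back to $C$. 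You instead abandon the minterm calculus once the witness bound is established and argue directly in the finite configuration graph on $C(Q)$ agents: reachability there is in $\mathrm{NSPACE}(2^{\mathcal{O}(n^2\log n)})$, Immerman--Szelepcs\'enyi absorbs the single alternation hidden in $\mathcal{ST}_b = \overline{\pre^*(\overline{\mathcal{C}_b})}$, and Savitch finishes. Your argument is more elementary and portable---it would work for any protocol class where Corollary~\ref{coro:closure}-style norm bounds hold---while the paper's argument exercises the constraint machinery of Section~4 end to end. Your sketch of the \PSPACE{} lower bound (encode a linear-bounded Turing machine by IO agents so that ill-specification coincides with acceptance) matches the construction the paper carries out in the appendix.
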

\begin{proof}

Let $\PP$ be an IO protocol with \(n\) states. 
Recall that \(\mathcal{ST}_b\) is given by \(\overline{ \pre^*( \overline{ \mathcal{C}_b } ) }\) where $\mathcal{C}_b$, for $b \in \{ 0,1 \}$, can be represented by the CoNF-constraint of single minterm defined by $x_i=0$ for all $q_i\in O^{-1}(1-b)$ and $0 \leq x_i \leq \infty$ otherwise. 
By Corollary~\ref{coro:closure}, there exists a constant $d$, independent of $\PP$, and a CoNF constraint $\Gamma \in {\cal G}(2, d)$ such that \(\sem{\Gamma}\) is given by \(\post^*(\mathcal{I}) \cap \overline{ \pre^*(\mathcal{ST}_0)}  \cap \overline{\pre^*(\mathcal{ST}_1 )}\).

In order to falsify condition \textsf{\textbf{1.}}   of Proposition~\ref{prop:decproc} it suffices to exhibit, following the previous reasoning, a “small” configuration \(C\), such that \(C(Q) \leq c^{2^{d\cdot (n^{2} \log n)}} \), in the intersection. 
Note that \(C\) can be written in \EXPSPACE.  
The \EXPSPACE{} decision procedure follows the following steps:
\textsf{\textbf{1.}}    Guess a “small” configuration \(C\).
\textsf{\textbf{2.}}  Check that \(C\) belongs to \(\post^*(\mathcal{I})\).
\textsf{\textbf{3.}}  Check that \(C\) belongs to \(\overline{\pre^*(\mathcal{ST}_b)}\), for \(b=0,1\).

\noindent
Algorithm for \textsf{\textbf{2.}}: Guess a at most double exponential sequence of minterms such that the first one  is a minterm of \(\mathcal{I}\), and every pair of consecutive minterms is related by \(\post^*[t]\) (given by Definition~\ref{def:fire}) for some \(t\). Observe that we keep track of the last computed element and the number of steps performed so far in exponential space. Then, check that \(C\) belongs to the resulting minterm. 

\noindent
Algorithm for \textsf{\textbf{3.}}: 
It follows from \(\EXPSPACE = \textsf{coEXPSPACE}\) that it is equivalent to check \(C \in \pre^*(\mathcal{ST}_b)\) is in \EXPSPACE.
Our algorithm is divided in two steps.

\noindent 
Step 1. Let \(c,d\) be such that \(\mathcal{ST}_b \in \mathcal{G}(c,d)\).
Guess a minterm \(M\) in \(\mathcal{G}(c,d)\) and proceed similarly to Algorithm for \textsf{\textbf{2.}} to compute a minterm of \(\pre^*(M)\) and then check that \(C\) belongs to the resulting minterm.

\noindent
Step 2. Verify that \(M\) does indeed belong to \(\mathcal{ST}_b\).
Formally, we rely on the following equivalences:
\(\sem{M} \subseteq \mathcal{ST}_b \) if{}f \(\sem{M} \subseteq \overline{ \pre^*(\overline{\mathcal{C}_b})}\) if{}f \(\sem{M} \cap  \pre^*(\overline{\mathcal{C}_b}) = \emptyset\).
Using \(\EXPSPACE = \textsf{coEXPSPACE}\) we now show that \(\sem{M} \cap  \pre^*(\overline{\mathcal{C}_b}) \neq \emptyset\) belongs to \(\EXPSPACE\).
We nondeterministically choose a minterm in \(\overline{\mathcal{C}_b}\) and as previously explained guess a minterm in \(\pre^*(\overline{\mathcal{C}_b})\).
Finally, we check whether it intersects with \(\sem{M}\).

We use a similar reasoning for checking in \EXPSPACE{} condition \textsf{\textbf{2.}} of Proposition~\ref{prop:decproc}. 

The proof for \PSPACE-hardness reduces from the acceptance problem for 
deterministic Turing machines running in linear space \cite{Papadimitriou}.
The proof follows the structure of analogous proofs for 1-safe Petri nets \cite{JonesLL77} (and also \cite{ChengEP95}) and will be provided in the full version.
\end{proof}

\subsection{Consequences}

In this section we list some consequences of Theorem \ref{thm:closureprepost} and Theorem \ref{thm:main}.

In \cite{DBLP:journals/dc/AngluinAER07}, Angluin \textit{et al.} showed that IO protocols can compute exactly the counting predicates, i.e., the predicates that can be expressed by counting constraints. This is also a consequence of the proof of Theorem \ref{thm:main}. Moreover, our results allow us to go further, and provide a bound on the number of states required to compute a predicate.

\begin{corollary} 
 IO population protocols compute exactly the counting predicates, i.e., the predicates corresponding to counting constraints.  \end{corollary}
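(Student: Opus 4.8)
The plan is to prove the two inclusions separately. For the ``only if'' direction (every predicate computed by an IO protocol is a counting predicate) I would package the machinery behind Theorem~\ref{thm:main}. Let $\PP=(\mathcal{A},I,O)$ be an IO protocol computing a predicate $\Pi\colon\pop{\Sigma}\to\{0,1\}$; then $\PP$ is well-specified, and the equivalences (A) and (B) established in the proof of Proposition~\ref{prop:decproc} show that $\Pi(X)=b$ if{}f $I(X)\in\pre^*(\mathcal{ST}_b)$ for every $X\in\pop{\Sigma}$. Now $\mathcal{C}_b$ is a counting set (the single minterm forcing $x_q=0$ for every $q\in O^{-1}(1-b)$), counting sets are closed under complement by Proposition~\ref{prop:conf}, and $\pre^*$ maps counting sets to counting sets by Theorem~\ref{thm:closureprepost}; hence both $\mathcal{ST}_b=\overline{\pre^*(\overline{\mathcal{C}_b})}$ and $\pre^*(\mathcal{ST}_b)$ are counting sets. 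Finally, since the input mapping is simple, $I$ is an injective coordinate renaming $\sigma\mapsto\nu(\sigma)$ whose image $\mathcal{I}=\{C : C(q)=0 \text{ for all } q\notin\nu(\Sigma)\}$ is itself a counting set; intersecting $\pre^*(\mathcal{ST}_b)$ with $\mathcal{I}$, deleting the coordinates pinned to $0$, and renaming $x_{\nu(\sigma)}$ back to $x_\sigma$ turns $\Pi^{-1}(b)=\{X : I(X)\in\pre^*(\mathcal{ST}_b)\}$ into a counting set over $\Sigma$, so $\Pi$ is a counting predicate.

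For the ``if'' direction (every counting predicate is computed by some IO protocol, the part originally due to Angluin et al.~\cite{DBLP:journals/dc/AngluinAER07}) I would proceed by induction over a CoNF representation. Using $x_\sigma\leq u\equiv\neg(x_\sigma\geq u+1)$ together with the triviality of $x_\sigma\geq 0$, any counting constraint over $\Sigma$ is a Boolean combination of atomic thresholds ``$x_\sigma\geq k$'' with $k\geq 1$. For ``$x_\sigma\geq k$'' I would use a \emph{bubbling counter}: states $q_0=\nu(\sigma),q_1,\dots,q_{k-1}$ with transitions $(q_i,q_i)\mapsto(q_i,q_{i+1})$ for $0\leq i\leq k-2$, so that $q_{k-1}$ eventually becomes nonempty if{}f at least $k$ agents start in $q_0$; one then adds an absorbing state $\mathrm{yes}$ of output $1$, fed by broadcast transitions $(q_{k-1},z)\mapsto(q_{k-1},\mathrm{yes})$ and $(\mathrm{yes},z)\mapsto(\mathrm{yes},\mathrm{yes})$, and assigns output $0$ to every other state; all these transitions are immediate observation. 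Negation only swaps the output function, leaving the scheme unchanged. For conjunction and disjunction I would take the product on state sets $Q_1\times Q_2$, where a transition lets the observing agent update both components simultaneously, each according to a transition or a stutter of the respective factor, while the observed agent is left unchanged in both components; this keeps the scheme immediate observation, a fair execution of the product induces fair executions of both factors, and the suitable $\wedge$- or $\vee$-combination of the two output functions realizes the connective. Composing these gadgets along the CoNF formula yields the desired IO protocol.

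The routine parts are the closure bookkeeping in the first inclusion, which drops straight out of Proposition~\ref{prop:decproc} and Theorem~\ref{thm:closureprepost}, together with the case-by-case checks for the gadgets. The main obstacle, which I would write out carefully, lies in the second inclusion: one must verify that fairness of an execution of the product protocol transfers to both projections, so that each factor really stabilizes and the combined output stabilizes to the Boolean combination; and one must check that the bubbling counter is well-specified on every input population, in particular on those with $x_\sigma<k$, where no $\mathrm{yes}$ agent is ever created and the protocol must settle in a stable $0$-consensus.
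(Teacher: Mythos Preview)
Your proposal is correct. For the ``only if'' direction you take essentially the paper's route---closure of counting sets under complement and $\pre^*$ via Theorem~\ref{thm:closureprepost}---but you use the simpler description $\mathcal{I}\cap\pre^*(\mathcal{ST}_b)$ for the set of initial configurations stabilizing to $b$, whereas the paper writes $\mathcal{I}\cap\overline{\pre^*(\overline{\pre^*(\mathcal{ST}_b)})}$. For a well-specified protocol the two coincide on $\mathcal{I}$ (conditions \textsf{(A)} and \textsf{(B)} from Proposition~\ref{prop:decproc} give exactly this), so your simplification is sound; the paper's heavier expression is the one reused in the subsequent state-complexity corollary, which is presumably why it appears here.

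For the ``if'' direction you go further than the paper, which simply defers to Angluin et~al.~\cite{DBLP:journals/dc/AngluinAER07} and proves nothing. Your bubbling-counter gadget for $x_\sigma\geq k$ and the product construction for Boolean connectives are the standard approach; you have correctly identified the one genuine obligation, namely that fairness of a product execution projects to fairness in each factor (this uses that the reachable configuration space is finite, so some pair $(C_1,D)$ recurs infinitely often whenever $C_1$ does, and then product-fairness forces the $\PP_1$-step with $\PP_2$ stuttering). Your sketch is adequate for a corollary; the paper's own proof is even terser and covers only the direction that is actually new.
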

\begin{proof}
	Let $\PP$ be a well-specified IO protocol. The sets $\mathcal{I} \cap \overline{\pre^*(\overline{\pre^*(\mathcal{ST}_b)})}$ for \(b\in \{ 0,1 \}\) are the sets of initial configurations from which $\PP$ stabilizes to \(b=0,1\).  Theorem \ref{thm:closureprepost} shows that they are counting sets. 
\end{proof}

\begin{corollary} 
Let $\PP$ be an IO protocol computing a counting predicate $P(x_1,\ldots,x_k)$
of $U$-norm $u$ and $L$-norm $\ell$. 
Then there exists a constant $c$, independent of $\PP$, such that $\PP$ has at least 
$g \log \log (\max\{ u, \ell \})$ states, where $g$ denotes the inverse of the function $n \mapsto c \cdot (n^2 \log n)$.
\end{corollary}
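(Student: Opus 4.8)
The plan is to derive the state lower bound as a counting argument that inverts the size bound in Theorem~\ref{thm:closureprepost}. Suppose $\PP$ is an IO protocol with $n$ states computing a counting predicate $P$ of $U$-norm $u$ and $L$-norm $\ell$. The key idea is that, by the proof of the previous corollary, the set of initial configurations from which $\PP$ stabilizes to $b$ is
\[
S_b = \mathcal{I} \cap \overline{\pre^*\bigl(\overline{\pre^*(\mathcal{ST}_b)}\bigr)},
\]
and this set, viewed as a predicate over the input variables, must coincide with $\{X : P(X) = b\}$ up to the simple input map (which just renames $k$ of the $n$ states, so it does not affect norms). Hence a CoNF-constraint obtained for $S_b$ by the constructions of Section~4 is also a CoNF-constraint for the predicate $P$ (after projecting onto the $k$ input variables, which can only decrease the $L$- and $U$-norms).

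First I would track the norm blow-up through the chain of operations used to build $S_b$: starting from the trivial constraints for $\mathcal{C}_b$ (which lie in some $\mathcal{G}(2,d_0)$ with $d_0$ an absolute constant), apply Corollary~\ref{coro:closure} a bounded number of times — complement, $\pre^*$, complement, $\pre^*$, and intersection with $\mathcal{I}$ — each step increasing the second parameter of $\mathcal{G}(c,d)$ by at most $k+2$. Since the number of steps is a fixed constant, one concludes that $S_b$ (and thus $P$) is represented by a CoNF-constraint in $\mathcal{G}(2, D)$ for some absolute constant $D$; that is,
\[
\max\{u, \ell\} \le 2^{\,2^{D \cdot (n^2 \log n)}}.
\]

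Next I would simply invert this inequality. Taking $\log\log$ of both sides gives $\log\log(\max\{u,\ell\}) \le D \cdot (n^2 \log n)$ (up to the additive/multiplicative constants absorbed into $D$). Writing $c$ for a suitable absolute constant and $g$ for the inverse of $n \mapsto c\cdot(n^2\log n)$, monotonicity of $g$ yields $n \ge g\bigl(\log\log(\max\{u,\ell\})\bigr)$, which is the claimed bound. The one subtlety is matching the exact form ``$g \log\log(\max\{u,\ell\})$ states'': this is just the statement $n \ge g(\log\log(\max\{u,\ell\}))$ with $g$ defined as that inverse function, so no real work remains once the double-exponential size bound on $P$ is in place.

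The main obstacle I expect is bookkeeping rather than conceptual: one must be careful that the predicate computed by $\PP$ really is represented by a constraint whose norms are governed by $n$ alone (not by the number of input variables, the number of transitions, or the number of minterms), and that passing through $\mathcal{I}$ — with the simple input mapping — does not secretly reintroduce dependence on anything large. Because $\mathcal{I}$ for a simple input map is itself a single fixed minterm of $U$-norm $0$ and $L$-norm bounded by a constant, and projection onto input coordinates does not increase norms, this goes through, but it is the step that needs the most care to state cleanly.
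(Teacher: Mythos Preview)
Your proposal is correct and follows essentially the same route as the paper: express the set of inputs stabilizing to $1$ as $\mathcal{I} \cap \overline{\pre^*(\overline{\pre^*(\mathcal{ST}_1)})}$, bound its norms by $2^{2^{\mathcal{O}(n^2\log n)}}$ via Corollary~\ref{coro:closure}, and invert. Your write-up is in fact more explicit than the paper's about tracking the constant through the fixed-length chain of closure operations and about why passing through $\mathcal{I}$ and projecting to the input coordinates cannot increase norms; the only tiny inaccuracy is that $\mathcal{I}$ for a simple input map is not literally a single minterm (the ``at least two agents'' condition needs a small disjunction), but its norms are still absolute constants, so the argument is unaffected.
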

\begin{proof}
	The set \(\mathcal{I} \cap \overline{\pre^*(\overline{\pre^*(\mathcal{ST}_1)})}\) describes the initial configurations that stabilize to $1$, i.e., the initial configurations for which the predicate computed by the protocol is true. 
	By Corollary~\ref{coro:closure} (using a reasoning similar to that of Theorem \ref{thm:main}), if $\PP$ has $n$ states, then  the $U$-norm and $L$-norm of \(\mathcal{I} \cap \overline{\pre^*(\overline{\pre^*(\mathcal{ST}_1)})}\) are bounded by the function $f(n)= 2^{2^{{\cal O}(n^2 \log n)}}$. Therefore, for a certain constant $c$, $\log \log \max\{ u, \ell \} \leq c \cdot (n^2 \log n)$ and the number of states of a protocol computing a predicate of $U$-norm $u$ and $L$-norm $\ell$ is at least $g \log \log (\max\{ u, \ell \})$, where $g(x)$ is the inverse function of $x \mapsto c \cdot (x^2 \log x)$. 
\end{proof}

Finally, we can show that the correctness problem for IO protocols is also in \EXPSPACE.

\begin{corollary}
Let $\PP$ be an IO population protocol with $n$ states and $k$ input states, and let $P(x_1, \ldots, x_k)$
be a counting predicate, expressed as a CoNF-constraint. The correctness problem for $\PP$ and $P$, i.e., the problem of deciding if $\PP$ computes $P$,  is in \EXPSPACE.
\end{corollary}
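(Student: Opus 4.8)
The plan is to reduce the correctness problem to the well specification problem together with an equality test between two counting sets, and then to decide that test with the guess-and-check technique of Theorem~\ref{thm:main}.

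\textbf{Reduction.} A protocol computes a predicate if and only if it is well-specified, and when this holds the predicate it computes is unique. Hence $\PP$ computes $P$ if{}f (i) $\PP$ is well-specified, which is decidable in \EXPSPACE{} by Theorem~\ref{thm:main}, and (ii) the set $S_1$ of initial configurations from which $\PP$ stabilizes to $1$ coincides with the set of initial configurations encoding the models of $P$. Recall from the proof of the first Corollary of this section that $S_1 = \mathcal{I} \cap \overline{\pre^*(\overline{\pre^*(\mathcal{ST}_1)})}$, where $\mathcal{ST}_1 = \overline{\pre^*(\overline{\mathcal{C}_1})}$, the set $\mathcal{C}_1$ is a single-minterm constraint of constant norm, and $\mathcal{I}$ is a counting set with a CoNF representation of constant norm (since $\PP$ has a simple input mapping). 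Applying Corollary~\ref{coro:closure} a constant number of times, $S_1$ is a counting set in $\mathcal{G}(2, d)$ for a constant $d$ independent of $\PP$, hence it has a CoNF representation whose $L$- and $U$-norms are bounded by $2^{2^{\mathcal{O}(n^2 \log n)}}$.

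\textbf{The target set and the equality test.} Let $\nu \colon \Sigma \to Q$ be the injection underlying the simple input mapping of $\PP$. The set $\sem{P}_{\mathcal{I}} \defeq \{ I(X) \mid X \in \pop{\Sigma},\ X \models P \}$ is a counting set over $Q$: take the given CoNF-constraint for $P$, rename its variable $x_i$ to the coordinate of $\nu(\sigma_i)$, add the literal $x_q \leq 0$ for every $q \notin \nu(\Sigma)$, and intersect with $\mathcal{I}$; its norms are linear in those of the constraint for $P$, so $\sem{P}_{\mathcal{I}} \in \mathcal{G}(2, d')$ for a constant $d'$. Given well specification, condition (ii) is exactly $S_1 = \sem{P}_{\mathcal{I}}$, and since both sets lie in $\mathcal{I}$ this is equivalent to the emptiness of the two counting sets $S_1 \cap \overline{\sem{P}_{\mathcal{I}}}$ and $\sem{P}_{\mathcal{I}} \cap \overline{S_1}$. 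By Proposition~\ref{prop:oponconf} and Corollary~\ref{coro:closure}, these are again counting sets in $\mathcal{G}(2, d'')$ for a constant $d''$, hence representable by CoNF-constraints of at most doubly exponential norm.

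\textbf{Deciding emptiness, and the main obstacle.} As in the proof of Theorem~\ref{thm:main}, emptiness of a counting set given by such a constraint is decided by searching for a ``small'' witness configuration $C$ with $C(Q) \leq 2^{2^{\mathcal{O}(n^2 \log n)}}$ — writable in exponential space — and verifying membership of $C$. Membership in a set obtained from $\mathcal{I}$ and the $\mathcal{C}_b$ by a constant number of complementations, intersections and $\pre^*/\post^*$ operations is checked on the fly: the $\pre^*/\post^*$ steps are handled by nondeterministically guessing the at most doubly-exponentially long chains of minterms produced by $\fire{\cdot}$ of Definition~\ref{def:fire}, keeping only the current minterm and a step counter, and Boolean steps are handled via Proposition~\ref{prop:oponconf}; all of this runs in nondeterministic exponential space, and since $\EXPSPACE = \textsf{NEXPSPACE} = \textsf{coEXPSPACE}$ the non-emptiness tests and the overall procedure are in \EXPSPACE. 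The only delicate point is the bookkeeping: one must ensure that only a bounded number of Boolean and $\pre^*/\post^*$ operations is used, so that every intermediate counting set stays in $\mathcal{G}(2, d)$ for a fixed $d$ and therefore has a representation of at most doubly exponential norm; and one must never materialize a full CoNF-constraint (it may have doubly-exponentially many minterms), performing all membership and emptiness queries by the same guess-and-propagate method already used for Theorem~\ref{thm:main}.
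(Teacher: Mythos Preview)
Your proposal is correct and follows essentially the same route as the paper's sketch: both reduce to guessing a minterm of $\mathcal{I}\cap\overline{\pre^*(\overline{\pre^*(\mathcal{ST}_b)})}$ and checking it against the given predicate, relying on Corollary~\ref{coro:closure} for the norm bounds and on the guess-and-propagate machinery of Theorem~\ref{thm:main} for the on-the-fly membership tests. The only cosmetic difference is that you first check well specification and then test $S_1=\sem{P}_{\mathcal{I}}$ via the two symmetric differences, whereas the paper directly tests whether $S_1$ meets $\overline{\sem{P}}$ or $S_0$ meets $\sem{P}$; once well specification holds these are the same checks.

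One small imprecision: your claim that $\sem{P}_{\mathcal{I}}\in\mathcal{G}(2,d')$ for a \emph{constant} $d'$ is not literally true, since the norms of the input constraint for $P$ are part of the instance and need not be bounded by any function of $n$ alone. What you actually need (and what suffices) is that the norms of $\sem{P}_{\mathcal{I}}$ are at most exponential in the input size, so that after the constant number of Boolean and $\pre^*$ steps all intermediate norms remain bounded by $2^{2^{\mathrm{poly}(|\text{input}|)}}$ and witnesses fit in exponential space. This does not affect your argument, only its phrasing.
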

\begin{proof}[Proof Sketch.]
	We give a nondeterministic, exponential space algorithm for the complement of the correctness problem. The algorithm guesses nondeterministically a minterm of $\mathcal{I} \cap \overline{\pre^*(\overline{\pre^*(\mathcal{ST}_1)})}$, and checks
	that $\mathcal{I} \cap \overline{\pre^*(\overline{\pre^*(\mathcal{ST}_1)})}$ contains a configuration that does not satisfy $P$. 
	The algorithm does a similar check for \(\mathcal{ST}_0\) and a configuration that does satisfy \(P\).
	The minterm can be constructed in exponential space by Theorem \ref{thm:main}, and the check whether a minterm implies a CoNF-constraint can be done in polynomial time. 
\end{proof}

 \appendix
 \section{A PSPACE Lower Bound}\label{sec:lowerbound}
 
 We show that the well specification problem for IO protocols is \PSPACE-hard by reduction from the acceptance problem for 
 deterministic Turing machines running in linear space \cite{Papadimitriou}.

 Let $(Q, \Sigma, \Gamma, \delta, q_{\mathrm{init}}, q_{\mathrm{acc}}, q_{\mathrm{rej}})$ be a deterministic linear space Turing machine that uses exactly
 $n$ tape cells on an input of size $n$.
 As usual, $(q,a, q', a', d)\in\delta$ means that if the machine is currently in state $q$, 
 and its head sees symbol $a$ on the tape, 
 then the machine can move to state $q'$ while overwriting the letter $a$ to $a'$, 
 and moving the head in the direction $d\in \set{L, R}$.
 The machine accepts if it reaches $q_{\mathrm{acc}}\in Q$ and rejects if it reaches $q_{\mathrm{rej}}$.
 We assume w.l.o.g.\ that the machine does not leave the states $q_{\mathrm{acc}}$ or $q_{\mathrm{rej}}$
 once they are reached, and that either of these states is reached on every execution, and that the machine
 does not attempt to ``fall off'' the tape by moving left from the leftmost tape cell or right from the rightmost tape cell.
 A configuration of the machine consists of the current state $q\in Q$, the head position $i\in\set{1,\ldots,n}$, and
 the current tape contents $\Gamma^n$.
 
 We define a protocol scheme \(\mathcal{A}\) that simulates the behaviour of the machine; the protocol
 will be well-specified if and only if the machine does not accept.
 The protocol will use agents to track the configuration of the Turing machine.
 It will use an additional agent to guess and execute transitions.
 
 We define the possible states of the protocol.
 Let $\mathsf{Head} = \set{1,\ldots, n}$ be states used to track the head position.
 The protocol has the following set of states.
 \begin{itemize}
 \item \textbf{[Configuration states]} $Q \cup \mathsf{Head} \cup (\Gamma\times \mathsf{Head})$.
 An agent in state $q \in Q$ models that the machine is in state $q$, 
 an agent in $c\in \mathsf{Head}$ models that the head is at position $c$, 
 and an agent at $(a,c) \in \Gamma \times \mathsf{Head}$ models that cell $c$ contains letter $a$. 
 Of course, a population need not model a state of the machine consistently, because, e.g., 
 multiple participants may be in different states in $Q$. 
 The protocol will have rules to detect inconsistencies and move all agents to a special state.
 \item \textbf{[Transition states]} $\set{ t, \tuple{t\mid i}, \tuple{t\mid i,a}, \tuple{t\mid i,a, 1}, \tuple{t\mid i,a,2}\mid t\in\delta}\cup\set{\mathsf{start}}$.
 These states are used to simulate the execution of the transition of the machine,
 An agent in state $t$ denotes the simulation of the Turing machine by the protocol will execute the transition $t$.
 The other states are book-keeping states to keep track of intermediate steps in the simulation.
 \item \textbf{[Zombie]} A distinguished \emph{zombie state} $\mathsf{zombie}$.
 \end{itemize}
 All states have output \(1\) except \(q_{\mathrm{acc}}\) which outputs \(0\).
 
 To begin with, we place agents in $q_{init}$, $1$, and $(x_1, 1), \ldots, (x_n, n)$ to encode the start configuration
 of the Turing machine (with input $x_1\ldots x_n$), and agents in $\mathsf{start}$.
 A configuration of the protocol is said to be \emph{good for simulation} if there is 
 exactly one agent in any state in $Q$, exactly one agent in any state in $\mathsf{Head}$,
 exactly one agent in a state $(a,i) \in \Gamma\times\mathsf{Head}$ for each $i\in \mathsf{Head}$,
 and exactly one agent in any transition state, and no agent in the zombie state.
 We shall define transitions of the protocol that ensure that if the protocol is started from an
 initial configuration that is good for simulation, then we can simulate the behavior of the Turing
 machine for some number of steps and remain in a configuration that is good for simulation.
 Also, we shall add rules that if the protocol is started in a configuration that is not good for simulation,
 then eventually all agents enter the zombie state.

 With these invariants, we shall ensure that if the Turing machine accepts, then there is a run of the protocol
 starting from a good for simulation configuration which can reach a dissensus state.
 However, if the Turing machine does not accept, then for all input configurations, all reachable states are
 stable $1$-consensus states.
 Thus, the IO protocol will be ill-specified if{}f the Turing machine accepts.
 Next, we describe the transitions of the protocol.
 
 First, we show how configuration that are not good for simulation can be detected and how all states can become zombies
 in that case.
 The idea is that if any agent meets an agent in the zombie state, it converts its own state to zombie as well.
 By fairness, if there is any agent in a zombie state, then eventually all agents become zombies.
 Now, if a configuration is not good for simulation, by fairness, eventually two agents who together violate
 the good-for-simulation property must meet. At that point, we convert one of them to a zombie. 
 
 With this discussion, we only focus on initial configurations which are good for simulation.
 In the following, we use fairness to ensure that the sequence of transitions described below will be eventually executed.
 \subparagraph*{(Step 1)} Simulation always starts with an agent in state $\mathsf{start}$ observing an (unique!) agent in the state $q\in Q$ and guessing a transition in $\delta$.
 That is, the agent in state $\mathsf{start}$ observes $q$ and updates its state to a transition in $\delta$ with source $q$.
 Let us say the transition is $t \equiv (q, a, q', a', L) \in \delta$ (the case of moving right is analogous).
 
 \subparagraph*{(Step 2)}%
 \label{par:_step_2_}
 The (unique) agent in state $t$ observes the (unique) agent in state $i\in\mathsf{Head}$ and updates its state to $\tuple{t\mid i}$.
 This encodes the current head position along with the guessed transition.
 
 \subparagraph*{(Step 3)}%
 \label{par:_step_3_}
 The agent $\tuple{t\mid i}$ observes the agent $(a,i)$ and updates its state to $\tuple{t\mid i,a}$.
 At this point, this agent has ensured that the transition $t$ can fire from the current configuration.
 However, if the agent meets an agent $(b, i)$ for $b\neq a$, it goes back to state $\mathsf{start}$, because the transition
 was guessed incorrectly.
 
 \subparagraph*{(Step 4)}%
 \label{par:_step_4_}
 Now, the agent in state $q$ observes $\tuple{t\mid i,a}$ and updates its state to $q'$.
 Since we started from a good for simulation configuration, the new configuration is still good for simulation and the unique
 agent encoding the state of the machine is in state $q'$.
 
 \subparagraph*{(Step 5)}%
 \label{par:_step_5_}
 The agent $\tuple{t\mid i,a}$ observes $q'$ and updates its state to $\tuple{t\mid i,a,1}$, where the ``$1$'' encodes that
 the state transition has been made.
 It is possible that $q = q'$ and step 4 is omitted. However, the effect on the configuration is the same in this case.
 
 \subparagraph*{(Step 6)}%
 \label{par:_step_6_}
  Now we update the tape cell.
 The agent $(a, i)$ observes $\tuple{t\mid i,a,1}$ and updates itself to $(a',i)$.
 
 \subparagraph*{(Step 7)}%
 \label{par:_step_7_}
 The agent $\tuple{t\mid i,a,1}$ observes $(a', i)$ and moves to $\tuple{t\mid i,a,2}$: the tape cell has been correctly
 updated.
 If $a = a'$, step 6 may be omitted, but the combined effect of these two steps keeps the simulation valid.
 
 \subparagraph*{(Step 8)}%
 \label{par:_step_8_}
 Now we update the head position.
 The agent in state $i\in \mathsf{Head}$ observes the agent in $\tuple{t\mid i,a,2}$ and updates itself to $i-1$.
 (Or to $i+1$ if the transition moves right.)
 
 \subparagraph*{(Step 9)}%
 \label{par:_step_9_}
  Finally, $\tuple{t\mid i,a,2}$ observes $i-1$ and moves to $\mathsf{start}$.
 
 In sum, all these transitions correctly encode one step of the Turing machine and keeps the protocol in a good for simulation state.
 Soundness is easy to show: there is a scheduler that always chooses the correct agents and thus simulates the Turing machine.
 Thus, if the machine accepts, we get to a good for simulation state with an agent in $q_{\mathit{acc}}$.
 From this point, no transition fires because there is no transition with this state as source.
 But this configuration has one agent with output $0$ and others with output $1$, and is a dissensus. 
 On the other hand, by induction, we can show completeness: the simulation either maintains a reachable configuration or a partially
 executed reachable configuration.
 
 An inspection of the interaction rules reveals it is an immediate observation protocol.
 Together, we have shown that the IO protocol can reach a dissensus if{}f the Turing machine accepts.

\end{document}